\setlist[enumerate]{leftmargin=.5in}
\setlist[itemize]{leftmargin=.5in}
\crefname{hypothesis}{Hypothesis}{Hypotheses}
\title{Localized covariance estimation: A Bayesian justification\thanks{Submitted to the editors DATE.
\funding{RJW is supported by the Office of Naval Research through BRC award N00014-18-1-2363 and the National Science Foundation through FRG award 1952777, under the aegis of Joel A. Tropp.
MM is supported by the US Office of Naval Research (ONR) grant N00014-21-1-2309.}}}
\author{Robert J. Webber\thanks{Computing \& Mathematical Sciences, California Institute of Technology, Pasadena, CA
  (\email{rwebber@caltech.edu}).}
\and Matthias Morzfeld\thanks{Institute of Geophysics and Planetary Physics, Sripps Institution of Oceanography, University of California, San Diego, San Diego, CA
  (\email{matti@ucsd.edu}).}}
\newcommand*{\addFileDependency}[1]{% argument=file name and extension
  \typeout{(#1)}% latexmk will find this if $recorder=0 (however, in that case, it will ignore #1 if it is a .aux or .pdf file etc and it exists! if it doesn't exist, it will appear in the list of dependents regardless)
  \@addtofilelist{#1}% if you want it to appear in \listfiles, not really necessary and latexmk doesn't use this
  \IfFileExists{#1}{}{\typeout{No file #1.}}% latexmk will find this message if #1 doesn't exist (yet)
}
\newcommand*{\myexternaldocument}[1]{%
    \externaldocument{#1}%
    \addFileDependency{#1.tex}%
    \addFileDependency{#1.aux}%
}
\DeclareMathOperator*{\argmax}{arg\,max}
\begin{document}

\maketitle

% REQUIRED
\begin{abstract}
    A major problem in numerical weather prediction (NWP) is the estimation of high-dimensional covariance matrices from a small number of samples.
    Maximum likelihood estimators cannot provide reliable estimates when the overall dimension is much larger than the number of samples.
    Fortunately, NWP practitioners have found ingenious ways to boost the accuracy of their covariance estimators by leveraging the assumption that the correlations decay with spatial distance.
    In this work, Bayesian statistics is used to provide a new justification and analysis of the practical NWP covariance estimators.
    % Indeed, covariance estimation is naturally Bayesian: an unknown covariance matrix is estimated from data (the available samples) and prior information (spatial decay of correlations).
    The Bayesian framework involves manipulating distributions over symmetric positive definite matrices, and it leads to two main findings: 
    (i) the commonly used ``hybrid estimator'' for the covariance matrix has a naturally Bayesian interpretation;
    (ii) the very commonly used ``Schur product estimator'' is \emph{not} Bayesian, but it can be studied and understood within the Bayesian framework.
    As practical implications, the Bayesian framework shows how to reduce the amount of tuning required for covariance estimation, and it suggests that efficient covariance estimation should be rooted in understanding and penalizing \emph{conditional} correlations, rather than correlations.
\end{abstract}

% REQUIRED
\begin{keywords}
  Covariance estimation, Bayesian statistics, numerical weather prediction
\end{keywords}

% REQUIRED
\begin{AMS}
  62H10, 65C20, 86-10
\end{AMS}

% \item
% Probability density or probability distribution? Sometimes we use both in the same sentence: ``the Bayesian posterior density is also an inverse Wishart distribution.''
%\item 
%What do we call the estimators? I used MAP, you also used MAP, sometimes we call them QC estimators etc. We should go through this carefully to have a nice unified notation.
% \item
% You define the QC prior over positive semi definite matrices. I think this should be positive definite because we use the inverse. Let's avoid complication and assume $\Sigma$ is SPD, ok?

\section{Introduction}
In this work, we provide new insights into the estimation of high-dimensional covariance matrices from a small number of samples.
Our work is motivated by numerical weather prediction (NWP), a setting in which covariance estimation arises naturally and at a vast scale.
The goal in NWP is to generate a set of weather forecasts based on global weather models and real-world observations of the Earth's atmosphere~\cite{BTB14}.
All the standard NWP techniques require estimating the covariance matrix for the near-term weather forecasts~\cite{L03,PZ15}, but the dimensionality of the covariance estimation problem in NWP is immense. 
A typical global weather model has billions of unknowns, which are updated by tens of millions of observations within less than six hours of computing time.
The large cost of running the weather model necessitates that the ensemble size (number of model integrations) is small compared to the number of unknown weather variables.
A typical ensemble size is $\leq 100$ and therefore six orders of magnitude smaller than the number of unknowns.

The sample covariance is not an accurate covariance estimator unless the ensemble size is larger than the number of unknowns \cite{BL08a,BL08}.
Therefore, covariance estimation at the extreme scale of NWP can only be accomplished by using additional information and tricks, commonly referred to as ``covariance localization''.
The basic idea is that covariances should decay with spatial distance.
On six hour time scales, the weather in La Jolla, California, is uncorrelated with the weather in Chicago, Illinois.
Localization, in its simplest form, means damping estimates of long-range of covariances because large magnitudes are caused by sampling error, not by the existence of long-range covariances~\cite{HM98,H01}.
Localization started off as an ad hoc procedure that perhaps grew out of desperation to make NWP work -- early NWP attempts using the ensemble Kalman filter \cite{E09Book}, for example, led to useless forecasts because of the large errors in the forecast covariances.
By now, however, localization has been widely accepted as a necessary ingredient within the NWP community~\cite{HWAS09}.

Not surprisingly, localization has been studied extensively.
In the NWP community, theoretical work has focused on
adaptive localization methods \cite{AL13} and theories for optimal localization \cite{F15, Menetrier15,MH22}, but implementing these techniques can be data-intensive.
%and multiscale extensions \cite{HMS21,SWS2019}.
%(see also Section~\ref{sec:Review}).
In practice,
localization is often implemented using a Schur product estimator \cite{HM98,H01,OEtAl04}, a hybrid covariance estimator \cite{BS13,MA15,SHKB18} (also called a ``shrinkage'' estimator~\cite{NRS15,PSS21}),
or a combination of the two estimators \cite{MA15},
but there is little to no mathematical justification.
Meanwhile, the statistical community has introduced localized estimators with rigorous guarantees \cite{BL08a,FB07,BL08,CRZ16}.
However, these estimators are only guaranteed to work in the asymptotic limit as the ensemble size and the state dimension jointly grow to infinity.
It remains unclear which localized estimators work best with \emph{finite} ensemble size and finite state dimension.

To our surprise, localization is not typically understood from the Bayesian perspective,
even though localization is a naturally  Bayesian procedure: 
We estimate an unknown (the covariance matrix) based on limited data (the ensemble\slash forecast states) and enforce prior information about the problem structure (the spatial decay of covariances).
This paper is about describing the Bayesian perspective and explaining why this interpretation of covariance localization may be useful in practice.
Put simply, we ask and answer the following question: ``\emph{Are there any Bayesian prior distributions that lead to existing localization methods?}''
This question is equivalent to asking, ``\emph{Which existing localization methods are rooted in a Bayesian framework?}''

To answer this question, we consider two different Bayesian prior distributions.
First, we consider the inverse Wishart distribution \cite[Sec.~5.2]{press2005applied}, a classical distribution over covariance matrices that leads to the hybrid covariance estimator \cite{HM98,H01,OEtAl04,BS13,MA15,SHKB18} as a maximum a posteriori estimator.
Second, we consider a new ``quadratically constrained'' (QC) distribution, which forces the off-diagonal entries of the precision matrix (inverse covariance matrix) to be small.
We introduce the QC prior to study localization via Schur products \cite{BS13,MA15,SHKB18,NRS15,PSS21}, which is a common method for covariance estimation but is surprisingly \emph{not} Bayesian.
We show that the QC covariance estimator converges to the Schur product estimator as the localization strength parameter tends toward infinity.
In summary, our work provides a new Bayesian justification for two commonly used localization estimators in NWP.

The Bayesian framework is useful for several reasons.
First, the framework is designed for finite ensemble size and, thus, more practically applicable than statistical techniques that are largely asymptotic (e.g., \cite{BL08a,FB07,BL08,CRZ16}).

Second, the Bayesian framework suggests how to adjust the parameters in the hybrid estimator and Schur product estimator as the ensemble size changes.
Typically in an operational setting, this adjustment is done implicitly, since the localization is re-tuned as the ensemble grows larger or smaller.
Our Bayesian theory helps with reducing the amount of tuning required.

Last, the Bayesian theory may help to construct new localization estimators for the future. 
Every year, Earth models are becoming increasingly complex, e.g., coupled atmosphere, ocean and sea ice models, or seasonal to sub-seasonal forecast models, and data assimilation is also being extended to geomagnetic models \cite{FHetAl10,GMTK21}.
For all these models, traditional localization based on a single length scale parameter may no longer be appropriate.
The theoretical foundations laid here are not limited to a single length scale parameter and are more generally applicable.
As a central feature, our theory emphasizes building the localization scheme via the precision matrix that describes the conditional correlations between variables, not via the covariance matrix directly.

The rest of this paper is organized as follows.
\Cref{sec:review} reviews covariance estimation from the NWP perspective.
\Cref{sec:bayesian} analyzes covariance estimation from the Bayesian perspective.
\Cref{sec:numerical} numerically tests the predictions of the Bayesian theory.
\Cref{sec:theorems} proves mathematical theorems to support our Bayesian analysis.
\Cref{sec:conclusions} offers a summary and some conclusions.

Throughout this paper, we use bold lower case letters to refer to vectors and bold capital letters to refer to matrices.
The $ij$ entry of the matrix $\bm{A}$ is written $\bm{A}_{ij}$.
The determinant of a matrix $\bm{A}$ is written $|\bm{A}|$ and the Schur (element-wise) product of compatible matrices $\bm{A}$ and $\bm{B}$ is written $\bm{A} \circ \bm{B}$.
Last, the Frobenius norm of a matrix is written $\lVert \bm{A} \rVert_{\rm F} = \bigl(\sum_{i,j=1}^d |\bm{A}_{ij}|^2\bigr)^{1/2}$.

% We use big $\mathcal{O}$ notation to indicate limits as $n \rightarrow \infty$.
% Explicitly, $f(n) = o(g(n))$ means that $\lim_{n \rightarrow \infty} f(n) \slash g(n) = 0$,
% $f(n) = \mathcal{O}(g(n))$ means that $\lim_{n \rightarrow \infty} f(n) \slash g(n) < \infty$,
% and $f(n) = \Theta(g(n))$ means that $f(n) = \mathcal{O}(g(n))$ and $g(n) = \mathcal{O}(f(n))$.

\section{A rapid review of NWP covariance estimation}
\label{sec:review}
We start by briefly describing how covariance estimation is accomplished in NWP.
To keep things simple, we assume that independent samples $\bm{x}_1, \ldots, \bm{x}_n \in \mathbb{R}^d$ are drawn from a mean-zero Gaussian distribution,
\begin{equation*}
    \bm{x}_i \sim \mathcal{N}(\bm{0}, \bm{\Sigma}).
\end{equation*}
We assume that ensemble size $n$ is much smaller than the dimension $d$.
Our goal is to estimate the $d \times d$ positive semidefinite covariance matrix $\bm{\Sigma}$ from $n \ll d$ samples.

A classical estimator for $\bm{\Sigma}$ is the maximum likelihood (ML) estimator
\begin{equation*}
    \hat{\bm{\Sigma}} = \argmax_{\bm{\Sigma}} 
     \prod_{i=1}^n p(\bm{x}_i | \bm{\Sigma}).
\end{equation*}
Under the mean zero assumption, the ML estimator is just the ``sample covariance'' or ``empirical covariance''
\begin{equation*}
    \hat{\bm{\Sigma}}^{\rm samp} =
    \frac{1}{n} \sum_{i = 1}^n \bm{x}_i \bm{x}_i^T.
\end{equation*}
The ML estimator is unbiased (mean $\bm{\Sigma}$), and as the number of data points $n$ approaches infinity, the ML estimator
converges to the true covariance $\bm{\Sigma}$ at the optimal rate  \cite[Ch.~8]{van2000asymptotic}
\begin{equation}
\label{eq:explicit_rate}
    \sqrt{n} (\hat{\bm{\Sigma}}^{\rm samp} - \bm{\Sigma})
    \stackrel{\mathcal{D}}{\rightarrow} \mathcal{N}(\bm{0}, \mathcal{I}(\bm{\Sigma})^{-1}),
\end{equation}
where $\mathcal{I}(\bm{\Sigma})^{-1}$ denotes the inverse Fisher information tensor.
This means that the ML estimator achieves the optimal $\mathcal{O}(1 \slash \sqrt{n})$ error scaling,
and the limiting distribution of $\sqrt{n} (\hat{\bm{\Sigma}}^{\rm samp} - \bm{\Sigma})$ is as tightly concentrated as possible.
However, since we work in a framework where $n \ll p$, the sample covariance is known to be inaccurate \cite{BL08a,BL08}.

\subsection{Schur product estimators}
Covariance localization is an approach for increasing the accuracy of the sample covariance when the ensemble size is small.
The basic idea is to damp long-range correlations based on the assumption that correlations decay with distance.
Localization can be implemented via a Schur product with a symmetric positive definite localization matrix $\bm{L}$:
\begin{equation}
\label{eq:schur}
    \hat{\bm{\Sigma}}^{\rm Schur} = \hat{\bm{\Sigma}}^{\rm samp} \circ \bm{L}.
\end{equation}
A simple example of a localization matrix is based on the Gaussian kernel and has elements
\begin{equation*}
    \bm{L}_{ij} 
    = \exp \bigl(-(d_{ij}\slash \ell)^2\bigr),
\end{equation*}
where $d_{ij}$ is the distance between grid points $i$ and $j$, and where $\ell > 0$ is a length scale parameter.
Alternately, one can replace the Gaussian kernel function with a Laplacian kernel function (exponential decay), or with a Gaspari-Cohn kernel function \cite{GC99}.
The latter is zero at large distances and therefore promotes sparsity in the  covariance matrix estimate.

Compared to the sample covariance, the Schur product estimator creates an element-wise bias of 
\begin{equation*}
    \mathbb{E} [\hat{\bm{\Sigma}}^{\rm Schur}_{ij}]
    - \bm{\Sigma}_{ij}
    = \bm{L}_{ij} \mathbb{E} [\hat{\bm{\Sigma}}^{\rm samp}_{ij}] - \bm{\Sigma}_{ij}
    = (\bm{L}_{ij} - 1) \bm{\Sigma}_{ij},
\end{equation*}
and changes the element-wise variance by a factor of
\begin{equation*}
    \frac{\textup{Var}[\hat{\bm{\Sigma}}^{\rm Schur}_{ij}]}{\textup{Var}[\hat{\bm{\Sigma}}^{\rm samp}_{ij}]}
    = \frac{\bm{L}_{ij}^2 \textup{Var}[\hat{\bm{\Sigma}}^{\rm samp}_{ij}]}{\textup{Var}[\hat{\bm{\Sigma}}^{\rm samp}_{ij}]}
    = \bm{L}_{ij}^2.
\end{equation*}
Since $\bm{L}_{i,j}$ is small at large spatial separations, the effect of localization is clear: it introduces a small bias while drastically reducing the variance of the estimator. 

\subsection{Hybrid estimators}
A second important and widely used covariance estimator is the ``hybrid'' estimator $\hat{\bm{\Sigma}}^{\rm hyb}$, which is defined as a convex combination between the sample covariance $\hat{\bm{\Sigma}}^{\rm samp}$ and a prior covariance estimate $\bm{\Sigma}^{\rm prior}$:
\begin{equation}
    \label{eq:hybrid}
    \hat{\bm{\Sigma}}^{\rm hyb}
    = \alpha \bm{\Sigma}^{\rm prior} + (1-\alpha) \hat{\bm{\Sigma}}^{\rm samp},
\end{equation}
for some $\alpha \in (0, 1)$.
Compared to the unbiased estimator $\hat{\bm{\Sigma}}^{\rm samp}$, the hybrid estimator creates a bias of size 
\begin{equation*}
    \mathbb{E} [\hat{\bm{\Sigma}}^{\rm hyb}] - \bm{\Sigma}
     = \alpha \bm{\Sigma}^{\rm prior} + (1-\alpha) \mathbb{E} [\hat{\bm{\Sigma}}^{\rm samp}] - \bm{\Sigma}
     = \alpha (\bm{\Sigma}^{\rm prior} - \bm{\Sigma})
\end{equation*}
and changes the variance by a factor of 
\begin{equation*}
    \frac{\textup{Var}[\hat{\bm{\Sigma}}^{\rm hyb}]}{\textup{Var}[\hat{\bm{\Sigma}}^{\rm samp}]}
    = \frac{(1-\alpha)^2 \textup{Var}[\hat{\bm{\Sigma}}^{\rm samp}]}{\textup{Var}[\hat{\bm{\Sigma}}^{\rm samp}]} = (1-\alpha)^2.
\end{equation*}
Thus, the hybrid estimator typically adds a small bias while slightly reducing the variance.

In NWP, the hybrid estimator is often presented in an equivalent form
\begin{equation}
    \label{eq:more_general}
    \hat{\bm{\Sigma}}^{\rm hyb,\,NWP} = w_1\bm{\Sigma}^{\rm clim} + w_2\hat{\bm{\Sigma}}^{\rm samp},
\end{equation}
where $\bm{\Sigma}^{\rm clim}$
is a climatological covariance matrix,
derived from a long model run that reveals covariance structure inherent to the physical process.
The NWP version of the hybrid estimator in \cref{eq:more_general} and the version we presented in \cref{eq:hybrid} are equivalent if we set 
\begin{equation*}
    \alpha = 1 - w_2, \qquad \bm{\Sigma}^{\rm prior} = \frac{w_1}{1 - w_2} \hat{\bm{\Sigma}}^{\rm clim}.
\end{equation*}
In \cref{sec:iw_prior}, we will show that NWP researchers are using a principled Bayesian approach when applying the hybrid estimator, but the Bayesian ideas are somewhat hidden within the notation. 
If we use the right symbols and notation, we can frame the practical NWP estimators within a rigorous Bayesian perspective.

\subsection{Tuning of covariance estimators}
\label{sec:tuning}
The accuracy of the Schur product and hybrid estimators depends on the various parameters that go into the construction.
For the hybrid estimator in \cref{eq:hybrid}, one needs to specify the prior covariance matrix $\bm{\Sigma}^{\rm prior}$ and the interpolation factor $\alpha$.
For the Schur product estimator in \cref{eq:schur}, one needs to specify the parameters that define the localization matrix $\bm{L}$.
If we use Gaussian or Laplacian kernels to define the localization matrix, this means that one needs to determine an appropriate length scale $\ell$ for localization.

The parameters that define the covariance estimator are usually determined via parameter tuning, or, using more modern language, a  ``training'' phase.
The idea is to simply try a few parameters and then determine which parameter combination gives the most useful results.
For example, one can run an ensemble data assimilation algorithm on a set of training observations and compute the forecast error that results from each choice of parameters.
One then selects the parameters that lead to the smallest forecast errors.

This tuning is expensive, computationally and otherwise.
In practice, localization and hybrid estimators are often combined \cite{MA15}, which means that a relatively large number of parameters needs to be tuned, which is even more costly.
Even worse, this entire tuning process must be repeated whenever the underlying model is modified, or if the ensemble size is increased because more computational power is available.
% A good covariance estimator for one model or ensemble size can be sub-optimal for an improved model or an increased ensemble size. 
We will see in \cref{sec:bayesian} that the Bayesian perspective on covariance localization gives insights that can reduce the efforts that go into tuning covariance estimators.

\section{The Bayesian perspective on covariance estimation}
\label{sec:bayesian}

The main goal of Bayesian statistics is to combine prior information and data to estimate parameters in a model.
Bayesian statistics has three main components: the prior distribution, the likelihood, and the posterior distribution.
The Bayesian prior distribution encodes all information before any data are collected and the likelihood function infuses information from data into the posterior estimate.

Here, we apply Bayesian statistics to the problem of estimating a positive definite covariance matrix $\bm{\Sigma} \in \mathbb{R}^{d \times d}$ from a set of $n$ samples $\bm{x}_i$, $i=1,\dots,n$.
We assemble the $n$ samples into a $p\times n$ data matrix $\bm{X} = \begin{pmatrix} \bm{x}_1 & \cdots & \bm{x}_n \end{pmatrix}$,
and we express the posterior density function as
\begin{equation}
\label{eq:bayesian}
    \underbrace{p(\bm{\Sigma}| \bm{X})}_{\textup{posterior}}
    \propto \underbrace{p(\bm{\Sigma})}_{\textup{prior}}
    \underbrace{p(\bm{X} | \bm{\Sigma})}_{\textup{likelihood}}.
\end{equation}
The symbol $\propto$ indicates that the left- and right-hand sides are proportional over all choices of $\bm{\Sigma}$, but the proportionality constant is typically not needed for computing the covariance estimate.
In \cref{eq:bayesian}, the prior density function $p(\bm{\Sigma})$ is chosen to account for any structural knowledge of $\bm{\Sigma}$, e.g., the decay of correlations with spatial distance.
The likelihood function $p(\bm{X} | \bm{\Sigma})$ accounts for information from the data, which in our case are the $n$ samples assembled in the data matrix $\bm{X}$, and the likelihood function takes the form
\begin{equation}
\label{eq:invert}
    p(\bm{X} | \bm{\Sigma})
    = \Bigl|\frac{1}{2 \pi} \bm{\Sigma}^{-1}\Bigr|^{n \slash 2}
    \exp\Bigl(-\frac{1}{2} \sum_{i=1}^n \bm{x}_i^T \bm{\Sigma}^{-1} \bm{x}_i \Bigr).
\end{equation}
Last, the posterior density function $p(\bm{\Sigma}|\bm{X})$ gives a distribution of possible covariance matrices.
Using the posterior density, we can calculate the ``maximum a posterior'' (MAP) estimator
\begin{equation}
    \hat{\bm{\Sigma}}^{\rm MAP} = \argmax_{\bm{\Sigma}} p(\bm{\Sigma}| \bm{X}).
\end{equation}
The MAP estimator can be regarded as the single most likely value for the covariance under the posterior distribution.
Other estimators (e.g., mean, median) are equally valid, but are harder to compute or analyze.

With the uniform prior distribution $p(\bm{\Sigma}) = \text{Const.}$, the MAP estimator is the same as the ML estimator
\begin{equation}
    \hat{\bm{\Sigma}}^{\rm MAP} = \argmax_{\bm{\Sigma}} 
    p(\bm{X} | \bm{\Sigma})
    = \frac{1}{n} \sum_{i = 1}^n \bm{x}_i \bm{x}_i^T.
\end{equation}
However, more generally, the choice of prior distribution has a non-trivial effect on the Bayesian posterior
--- the whole purpose of imposing a prior is to fill in the gaps that the data leave open.
The rest of this paper is about choices of non-uniform priors $p(\bm{\Sigma})$ that promote structure in the covariance estimate $\hat{\bm{\Sigma}}^{\rm MAP}$,
providing justification for existing but largely empirical covariance estimators in NWP.

\subsection{The inverse Wishart prior and hybrid estimators} \label{sec:iw_prior}

The inverse Wishart distribution \cite[Sec.~5.2]{press2005applied} is a classical distribution defined over symmetric positive definite matrices $\bm{\Sigma} \in \mathbb{R}^{d \times d}$ by the density
\begin{equation}
\label{eq:wishart}
    p(\bm{\Sigma}) \propto \bigl| \bm{\Sigma}^{-1} \bigr|^{m \slash 2}
    \exp\Bigl(-\frac{m}{2} \textup{tr} \bigl(\bm{\Sigma}^{\rm prior} \bm{\Sigma}^{-1} \bigr)\Bigr).
\end{equation}
There are two parameters in the inverse Wishart distribution:
$\bm{\Sigma}^{\rm prior}$ is the mode (most likely value) of the distribution, and $m$ is the ``sample size'' parameter that controls the width of the distribution around the mode:
a large $m$ leads to tightly concentrated distribution.

Given an inverse Wishart prior and a data matrix $\bm{X} = \begin{pmatrix} \bm{x}_1 & \cdots & \bm{x}_n \end{pmatrix}$, the Bayesian posterior is also an inverse Wishart distribution,
because the inverse Wishart distribution is the ``conjugate prior'' \cite{diaconis1979conjugate} to the mean-zero multivariate Gaussian likelihood:
\begin{align*}
    p(\bm{\Sigma} | \bm{X})
    &\propto
    \bigl| \bm{\Sigma}^{-1} \bigr|^{(m+n) \slash 2}
    \exp\Bigl(-\frac{m}{2} \textup{tr} \bigl(\bm{\Sigma}^{\rm prior} \bm{\Sigma}^{-1} \bigr)
    -\frac{1}{2} \sum_{i=1}^n \bm{x}_i^T \bm{\Sigma}^{-1} \bm{x}_i \Bigr) \\
    \label{eq:conjugate}
    & \propto \bigl| \bm{\Sigma}^{-1} \bigr|^{(m + n) \slash 2}
    \exp\Bigl(-\frac{m + n}{2} \textup{tr} \bigl(\hat{\bm{\Sigma}}^{\rm IW} \bm{\Sigma}^{-1} \bigr)\Bigr),
\end{align*}
where
\begin{equation}
\label{eq:IWMAP}
    \hat{\bm{\Sigma}}^{\rm IW} = \frac{m}{m + n} \bm{\Sigma}^{\rm prior} + \frac{n}{m + n} \hat{\bm{\Sigma}}^{\rm samp}.
\end{equation}
In the posterior distribution, the two inverse Wishart parameters are updated in response to the data:
the sample size parameter increases from $m$ to $m + n$,
and the mode changes from $\bm{\Sigma}^{\rm prior}$ to $\hat{\bm{\Sigma}}^{\rm IW}$ \cref{eq:IWMAP}.

It is now clear that the inverse Wishart prior leads to a covariance estimator \cref{eq:IWMAP} that is identical to the hybrid estimator \cref{eq:hybrid} with the parameter choice
\begin{equation}
\label{eq:ShrinkageAlpha}
\alpha = \frac{m}{m + n}.
\end{equation}
In other words, the hybrid estimator is the same estimator that would result from selecting an inverse Wishart prior and systematically applying a Bayesian analysis.
This perspective provides a Bayesian justification for the hybrid estimator, assuming that the parameters $m$ and $\bm{\Sigma}^{\rm prior}$ represent reasonable prior knowledge about the covariance structure.

A major benefit of Bayesian statistics is that it leads to a covariance estimator \cref{eq:IWMAP} valid for any sample size $n$, whereas standard NWP covariance estimators require tuning parameters whenever the sample size changes (\cref{sec:tuning}).
When $n$ is large, the Bayesian formula tells us to adjust our estimator according to
\begin{equation}
\label{eq:n_scaling}
    \hat{\bm{\Sigma}}^{\rm IW} 
    = \hat{\bm{\Sigma}}^{\rm samp} + \mathcal{O}(n^{-1}),
\end{equation}
and this scaling with $n$ ensures that $\hat{\bm{\Sigma}}^{\rm IW}$ converges to the true covariance at the optimal asymptotic rate \cref{eq:explicit_rate} as $n \rightarrow \infty$.
In NWP applications,
we anticipate consistent accuracy in covariance estimation when the localized covariance estimate is adjusted according to \cref{eq:IWMAP,eq:n_scaling}.
We will revisit this idea in the numerical examples in \cref{sec:numerical}.

\subsection{The QC prior and Schur product estimators} 

We now introduce a new ``quadratically contrained'' (QC) distribution to study localization via Schur products.
Surprisingly, localization via Schur products \emph{cannot} result directly from a Bayesian prior (\cref{prop:not_bayesian}).
However, the QC prior allows us to study Schur product localization in a rigorous and meaningful way, in the asymptotic limit of increasing penalization strength.

The QC distribution is defined over symmetric positive definite matrices by the density function
\begin{equation*}
    p(\bm{\Sigma}) \propto
    \exp\Bigl(- \frac{1}{4} \textup{tr}\bigl(\bm{\Sigma}^{-1} \bigr(\bm{\Theta} \circ \bm{\Sigma}^{-1}\bigr)\bigr) \Bigr).
\end{equation*}
The only parameter in the QC distribution is a symmetric nonnegative-valued matrix $\bm{\Theta}$.
The QC prior can be ``improper'' \cite{dawid1973marginalization}, i.e.,
the density can integrate to infinity for some $\bm{\Theta}$. 
However, the corresponding Bayesian posterior distribution
\begin{equation}
\label{eq:posterior}
    p(\bm{\Sigma}|\bm{X})
    % \propto p(\bm{\Sigma}) p(\bm{X}|\bm{\Sigma})
    \propto \bigl| \bm{\Sigma}^{-1} \bigr|^{n \slash 2}
    \exp\Bigl(-\frac{n}{2} \textup{tr} \bigl(\hat{\bm{\Sigma}}^{\rm samp} \bm{\Sigma}^{-1} \bigr) - \frac{1}{4} \textup{tr}\bigl(\bm{\Sigma}^{-1} \bigl( \bm{\Theta} \circ \bm{\Sigma}^{-1}\bigr)\bigr) \Bigr).
\end{equation}
is well-defined for every $\bm{\Theta}$.
We further show in \cref{prop:maximum} that this posterior distribution has a unique positive definite global maximizer given a large localization strength, which justifies the use of the MAP as a covariance estimator.

\subsubsection{Motivation for the QC prior}

The QC distribution can be derived as the maximum entropy or most ``random'' \cite{jaynes1957information} distribution that constrains the square entries of the precision matrix $\bm{\Sigma}^{-1}$ to be small (\cref{prop:max_entropy}).
More specifically, with entropy defined as  the amount of ``randomness'' in a density $p$ via
\begin{equation*}
    H[p] := - \int p(\bm{\Sigma}) \log p(\bm{\Sigma}) \mathop{d\bm{\Sigma}},
\end{equation*}
the QC density solves the maximization problem
\begin{equation*}
    \max_p 
    \biggl\{
    H[p]
    - \frac{1}{4} \sum_{i,j = 1}^d \bm{\Theta}_{ij}\, \int p(\bm{\Sigma}) |\bm{\Sigma}^{-1}_{ij}|^2 d\bm{\Sigma} \biggr\}.
\end{equation*}
Here, $\bm{\Theta}$ is the parameter that penalizes off-diagonal elements in $\bm{\Sigma}^{-1}$.

At first, it is perhaps strange that we define the QC prior to target off-diagonal elements in the \emph{precision} matrix $\bm{\Sigma}^{-1}$, while we aim to explain the Schur product estimator that constrains elements in the \emph{covariance} matrix $\bm{\Sigma}$.
However, there is a systematic Bayesian explanation for why targeting the precision matrix is the right approach, based on the conditional correlation structure.

The conditional correlation between two variables $\bm{x}_i$ and $\bm{x}_j$ measures the degree of association with the effects of all other components of $\bm{x}$ removed.
In many NWP applications, we expect that conditional correlations, even more so than correlations, should be confined to small neighborhoods.
For example, we expect the weather in La Jolla, California is conditionally uncorrelated with the weather in Chicago, Illinois, after accounting for the weather in all the in-between locations.
The fast decay of conditional correlations has been observed in many geophysical applications and has been described as the ``screening effect''  \cite{stein2002screening}.

In a Gaussian model, the conditional correlations are described explicitly by
\begin{equation*}
    \textup{corr}\bigl[\,\bm{x}_i, \bm{x}_j \,|\, (\bm{x}_k)_{k \notin \{i, j\}}\,\bigr] = -\frac{\bm{\Sigma}^{-1}_{ij}}{(\bm{\Sigma}^{-1}_{ii} \bm{\Sigma}^{-1}_{jj})^{1/2}}.
\end{equation*}
The magnitude of the conditional correlations is thus proportional to the magnitude of the $\bm{\Sigma}^{-1}$ elements.
The QC prior can be interpreted as enforcing prior knowledge of the screening effect, by targeting the off-diagonal entries of $\bm{\Sigma}^{-1}$.

As an example of the screening effect, we consider a Gaussian process with covariances defined by the Laplacian kernel 
\begin{equation*}
    k(x, y) = \exp\Bigl(-\frac{|x - y|}{\ell}\Bigr),
\end{equation*}
on a 1D spatial domain (not periodic).
When the data is generated from a uniform grid with mesh size $\Delta$, the corresponding covariance matrix is
\begin{equation*}
    \bm{\Sigma}
    = \begin{pmatrix}
    1 & e^{-\Delta} & \cdots & e^{-(d-2)\Delta} & e^{-(d-1)\Delta}  \\
    e^{-\Delta} & 1 & \cdots & e^{-(d-3)\Delta} & e^{-(d-2)\Delta} \\
    \vdots & \vdots & & \vdots & \vdots \\
    e^{-(d-2)\Delta} & e^{-(d-3)\Delta} & \cdots & 1 & e^{-\Delta} \\
    e^{-(d-1)\Delta} & e^{-(d-2)\Delta} & \cdots & e^{-\Delta} & 1
    \end{pmatrix},
\end{equation*}
and the precision matrix is
\begin{equation*}
    \bm{\Sigma}^{-1}
    = \frac{2}{e^{\Delta} - e^{-\Delta}} \begin{pmatrix}
    e^{\Delta} & -1 \\
    -1 & e^{\Delta} + e^{-\Delta} & \ddots \\
    & \ddots & \ddots & \ddots \\
    && \ddots & e^{\Delta} + e^{-\Delta} & -1 \\
    && & -1 & e^{\Delta}
    \end{pmatrix}.
\end{equation*}
The precision matrix $\bm{\Sigma}^{-1}$ is tridiagonal and, hence, has a faster off-diagonal decay than the covariance matrix (which has exponential decay).
This example thus supports the strategy of constraining off-diagonal entries in $\bm{\Sigma}^{-1}$, rather than in $\bm{\Sigma}$.

%In summary, the QC distribution is a natural Bayesian prior for localized covariance estimation because it is mathematically and physically well-motivated via maximum entropy properties and the screening effect.

\subsubsection{QC covariance estimator}

Next, we study the MAP estimator corresponding to the QC prior.
We do so by maximizing the logarithm of the posterior distribution \cref{eq:posterior}
\begin{equation*}
    \ell(\bm{\Sigma}) 
    % = \log p(\bm{\Sigma} | \bm{X}) 
    = \frac{n}{2} \log \bigl| \bm{\Sigma}^{-1} \bigr| - \frac{n}{2} \textup{tr}\bigl(\hat{\bm{\Sigma}}^{\rm samp} \bm{\Sigma}^{-1}\bigr) - \frac{1}{4} \textup{tr}\bigl(\bm{\Sigma}^{-1} \bigl( \bm{\Theta} \circ \bm{\Sigma}^{-1} \bigr) \bigr).
\end{equation*}
To find the unique global maximizer of $\ell$ we set its gradient
\begin{equation*}
    \nabla \ell = \frac{n}{2} \bm{\Sigma}^{-1}
    \Bigl[ -\bm{\Sigma} + \hat{\bm{\Sigma}}^{\rm samp} + \frac{1}{n} \bm{\Sigma}^{-1} \circ \bm{\Theta} \Bigr] \bm{\Sigma}^{-1}
\end{equation*}
equal to zero, and we obtain an implicit equation for the QC estimator
\begin{equation}
\label{eq:nonlinear}
    \bm{\Sigma}^{\rm QC} = \hat{\bm{\Sigma}}^{\rm samp}
    + \frac{1}{n} (\bm{\Sigma}^{\rm QC})^{-1} \circ \bm{\Theta}.
\end{equation}
In high dimensions, solving \cref{eq:nonlinear} is a challenge.
Nonetheless, we can extract useful asymptotic information from \cref{eq:nonlinear} and make the connection to Schur product estimators.

First, we note that
\cref{eq:nonlinear} implies the QC estimator is the same as the sample covariance
\begin{equation*}
    \hat{\bm{\Sigma}}_{ij}^{\rm QC} = \hat{\bm{\Sigma}}_{ij}^{\rm samp}.
\end{equation*}
for any $(i,j)$ entries such that $\bm{\Theta}_{ij} = 0$.
In other words, the QC estimator trusts the sample covariance completely if we do not penalize the  conditional correlation between $\bm{x}_i$ and $\bm{x}_j$ via $\bm{\Theta}_{i,j} > 0$.
In NWP, it is unusual to penalize variances, so we assume for the rest of this section that $\bm{\Theta}_{ij} > 0$ if and only if $i \neq j$, which implies that 
\begin{equation*}
    \hat{\bm{\Sigma}}_{ii}^{\rm QC} = \hat{\bm{\Sigma}}_{ii}^{\rm samp}, \qquad 1 \leq i \leq p.
\end{equation*}

We now consider the asymptotic behavior of the QC estimator when we set $\bm{\Theta} = s \bm{\Theta}^{\rm ref}$ and raise the penalization strength parameter $s \rightarrow \infty$.
In this limit, we may write
\begin{equation}
\label{eq:ApproxMAP}
    \hat{\bm{\Sigma}}^{\rm QC}
    = \bm{D} + s^{-1} \bm{\Delta},
\end{equation}
where $\bm{D}$ is a diagonal matrix with elements $\bm{D}_{ii} = \hat{\bm{\Sigma}}_{ii}^{\rm QC} = \hat{\bm{\Sigma}}_{ii}^{\rm samp}$ 
and $s^{-1} \bm{\Delta}$ is the matrix containing all off-diagonal elements of $\hat{\bm{\Sigma}}^{\rm QC}$.
As $s \rightarrow \infty$, the inverse of the QC estimator is given by the Taylor series expansion
\begin{equation}
\label{eq:ApproxMAPInv}
    (\hat{\bm{\Sigma}}^{\rm QC})^{-1} = \bm{D}^{-1} - s^{-1} \bm{D}^{-1} \bm{\Delta} \bm{D}^{-1} + \mathcal{O}(s^{-2}).
\end{equation}
Substituting~\cref{eq:ApproxMAP} and~\cref{eq:ApproxMAPInv} into~\cref{eq:nonlinear},
and solving for $s^{-1} \bm{\Delta}$, we find 
\begin{equation*}
    \hat{\bm{\Sigma}}_{ij}^{\rm QC} = \hat{\bm{\Sigma}}^{\rm samp}_{ij} \biggl(1 + \frac{\bm{\Theta}_{ij}}
    {n \hat{\bm{\Sigma}}^{\rm samp}_{ii} \hat{\bm{\Sigma}}^{\rm samp}_{jj}}\biggr)^{-1} + \mathcal{O}(s^{-2}).
\end{equation*}
Therefore, the QC covariance estimator is asymptotically a Schur product estimator that damps the off-diagonal entries of the covariance matrix.

The Bayesian framework thus begins to provide a theoretical foundation for the Schur product estimator used in NWP.
The QC estimator converges to a Schur product estimator with localization matrix
\begin{equation}
\label{eq:schur_formula}
    \bm{L}_{ij} = \biggl(1 + \frac{\bm{\Theta}_{ij}}
    {n \hat{\bm{\Sigma}}^{\rm samp}_{ii} \hat{\bm{\Sigma}}^{\rm samp}_{jj}}\biggr)^{-1}.
\end{equation}
This approximation becomes increasingly accurate in the limit of a large penalization strength.
%Second, we can rigorously evaluate the discrepancy between the estimators $\hat{\bm{\Sigma}}^{\rm MAP}$ and $\hat{\bm{\Sigma}}^{\rm samp} \circ \bm{L}$, and we can ask whether any other approximations to $\hat{\bm{\Sigma}}^{\rm MAP}$ would be more accurate.

\subsubsection{Adjusting the length scale}

When we use a localization matrix $\bm{L}$ with a single length scale parameter, the Bayesian perspective suggests how best to adjust the length scale with the ensemble size $n$.
As $n \rightarrow \infty$, the Schur product formula \cref{eq:schur_formula} satisfies
\begin{equation}
\label{eq:ensures}
    \bm{L}_{ij} = 1 + \mathcal{O}(n^{-1}).
\end{equation}
For example, if $\bm{L}_{ij} = \exp(-d_{ij} \slash  \ell)$,
we need to adjust the length scale as
\begin{equation}
\label{eq:SchurScalingL}
    \exp(- d_{ij}\slash \ell) = 1 + \mathcal{O}(n^{-1}) \implies
    \frac{1}{\ell} = \mathcal{O}(n^{-1}),
\end{equation} 
to ensure that \cref{eq:ensures} is satisfied.
Thus, we need to increase the length scale at a rate $\ell \sim n$ or faster.
If $\bm{L}_{ij} = \exp(-d_{ij}^2 \slash  \ell^2)$,
we need to adjust the length scale as
\begin{equation}
\label{eq:SchurScalingG}
    \exp\bigl(-(d_{ij}\slash \ell)^2\bigr) 
    = 1 + \mathcal{O}(n^{-1}) \implies
    \frac{1}{\ell^2} = \mathcal{O}(n^{-1}).
\end{equation}
This means we need to increase the length scale at a rate $\ell \sim n^{1 \slash 2}$ or faster.

The scaling of the length scale with the ensemble size is practically important in NWP.
Typically when the ensemble size changes, the localization length scale(s) are re-tuned, which is costly, both computationally and otherwise (\cref{sec:tuning}).
The Bayesian perspective naturally provides a scaling of the localization length scale with ensemble size, thus reducing the amount of tuning necessary when the data assimilation system undergoes an upgrade.

\section{Numerical illustration} \label{sec:numerical}

In this section, we numerically test the scalings for the hybrid estimator and Schur product estimator that are predicted by our Bayesian theory.
% First, when localization is performed using the 
% hybrid estimator in~\cref{eq:hybrid},
% we check the scaling $\alpha \sim n^{-1}$ (see~\cref{eq:ShrinkageAlpha})
% as the ensemble size $n$ increases.
% Second, when Schur product localization is performed using a Laplacian kernel,
% $\bm{L}_{ij} = \exp(-d_{ij}\slash \ell)$,
% we check the scaling $\ell \sim n$ (see \cref{eq:SchurScalingL});
% here, recall that $d_{i,j}$ is the ``distance'' between the components $i$ and $j$.
% Third, when Schur product localization is implemented via a Gaussian kernel,
% $ \bm{L}_{ij} = \exp(-d_{ij}^2\slash \ell^2)$,
% we check the scaling $\ell \sim n^{1 \slash 2}$ (see \cref{eq:SchurScalingG}). 
We perform numerical tests with a variety of covariance matrices and a variety of ensemble sizes to check that the scalings are useful in practice
and to reiterate that the theory holds at finite ensemble size, not only asymptotically (for $n,p\to\infty$).

\subsection{Covariance matrices}
We estimate covariance matrices for five different Gaussian models, each with mean zero and a (spatial) dimension of $d = 200$.
Following \cite{MH22}, the five covariance matrices are defined as follows.
\begin{enumerate}
    \item
    \emph{Single-scale, Laplacian kernel}. The first covariance matrix has elements
    \begin{equation*}
        \bm{\Sigma}^{\ell}_{ij} 
        = \exp( -d_{ij}/\ell),
    \end{equation*}
    where $\ell$ is the length scale, which we set to $\ell = 5$, and $d_{ij}$ is the distance between grid points $i$ and $j$, reflecting the periodic domain.
    \item
    \emph{Single-scale, Gaussian kernel}. The second covariance matrix is similar to the first, but the decay of covariance is faster. Namely, the matrix has elements
    \begin{equation*}
        \bm{\Sigma}^{\ell}_{ij} 
        = \exp(-(d_{ij}/\ell)^2)
    \end{equation*}
    where $\ell$ and $d_{ij}$ are the length scale and the distance between grid points on the periodic domain.
    Here too, we set $\ell = 5$.
    \item
    \emph{Multiscale covariance}. We define a multiscale covariance  matrix as the sum of two single-scale covariances
    \begin{equation*}
        \bm{\Sigma}^{\rm ms} = \tfrac{1}{2} \bigl( \bm{\Sigma}^{\ell_1} + \bm{\Sigma}^{\ell_2} \bigr),
    \end{equation*}
    and we set $\ell_1=2$ and $\ell_2=20$.
    We use single-scale covariance matrices defined by a Gaussian kernel, but similar results can be obtained with a Laplacian kernel.
    \item \emph{Nonstationary covariance}.
    The next covariance matrix is ``nonstationary'' \cite{PS03} and its elements are 
    \begin{equation*}
        \bm{\Sigma}^{\text{ns}}_{{i,j}} = \frac{(4 \ell_i \ell_j)^{1/4}}{(\ell_i + \ell_j)^{1/2}} \exp\Bigl( - \frac{2 |i-j|^2}{\ell_i + \ell_j}\Bigr),   
    \end{equation*}
    where $\sqrt{\ell_i}$ can be thought of as a local length scale.
    Here, we consider the case where $\ell_i$ increases linearly from $2.1$ to $22$ over the domain (not periodic).
    \item \emph{Pressure-wind covariance}.
    The final covariance matrix models two spatially extended fields, pressure and wind, that co-vary with each other according to 
    \begin{equation}
	    \label{eq:presswind}
	    w = \frac{\text{d}u}{\text{d}x},
    \end{equation}
    where $u$ is pressure and $w$ is wind.
    The pressure has a single-scale covariance $\bm{\Sigma}^\ell$ ($\ell=5$, Gaussian kernel). \Cref{eq:presswind} then implies that the covariance matrix of both variables (pressure first and then wind) is
    \begin{equation*}
	    \bm{\Sigma}^\text{pw} = \begin{bmatrix}
		\bm{\Sigma}^\ell & \mathbf{D}\bm{\Sigma}^\ell \\ \\\bm{\Sigma}^\ell \mathbf{D}^T & \mathbf{D}\bm{\Sigma}^\ell \mathbf{D}^T
	    \end{bmatrix}
    \end{equation*}
    where $\bm{D} \in \mathbb{R}^{200\times 200}$ is a periodic, centered, second-order discretization of the first derivative operator.  
    Note that pressure and wind both have dimension 200 so that the overall dimension for this problem is 400.   
\end{enumerate}

\Cref{fig:CovFig} shows four of the five covariance matrices used in our numerical experiments.
The figure does not include a plot of the single-scale covariance with a Laplacian kernel because it looks similar to the single-scale covariance matrix with a Gaussian kernel.

\begin{figure}[t]
	\centering
	\includegraphics[width=.8\textwidth]{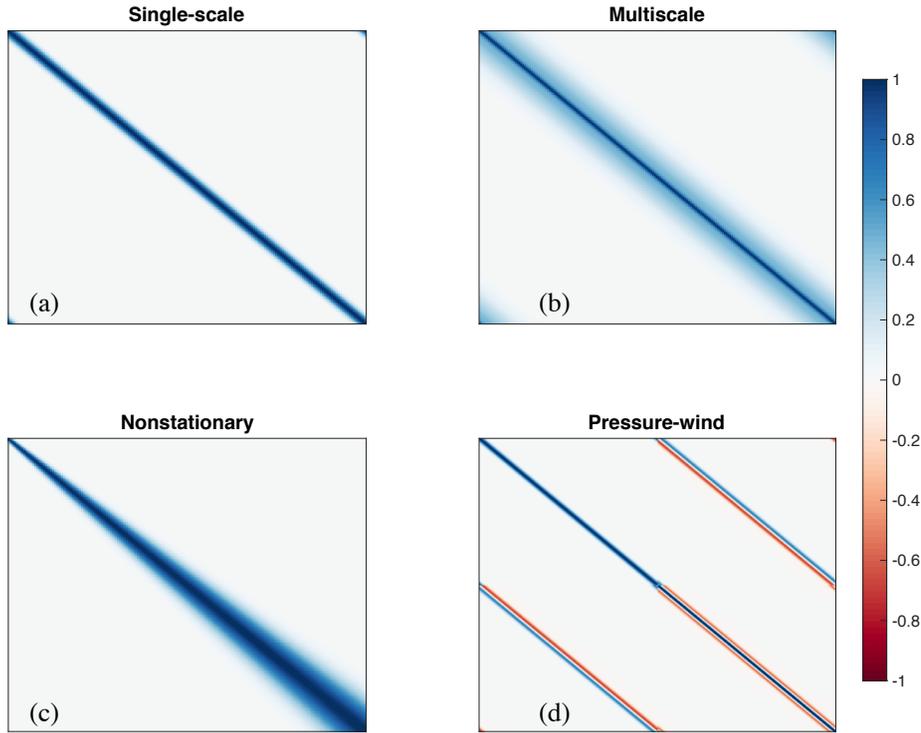}
	\caption{
	Four of the five covariance matrices used in numerical experiments.
% 	Shown are the matrix elements of the single-scale covariance with Gaussian kernel (a),
% 	the multiscale covariance (b), the nonstationary covariance matrix (c), and the pressure-wind covariance matrix  (d).
	}
	\label{fig:CovFig}
\end{figure}

\subsection{Benefits of localized covariance estimators}

To illustrate the benefit of localization,
we consider estimating a single-scale covariance matrix (Gaussian kernel) of dimension $200\times 200$ from $n=30$ samples.
\Cref{fig:LocIllu} shows the true covariance matrix, along with the sample covariance and the hybrid and Schur product estimates.
The parameter tuning for the hybrid and Schur product estimators is described in \cref{sec:hybrid_experiments,sec:schur_experiments}.

\begin{figure}[t]
	\centering
	\includegraphics[width=.8\textwidth]{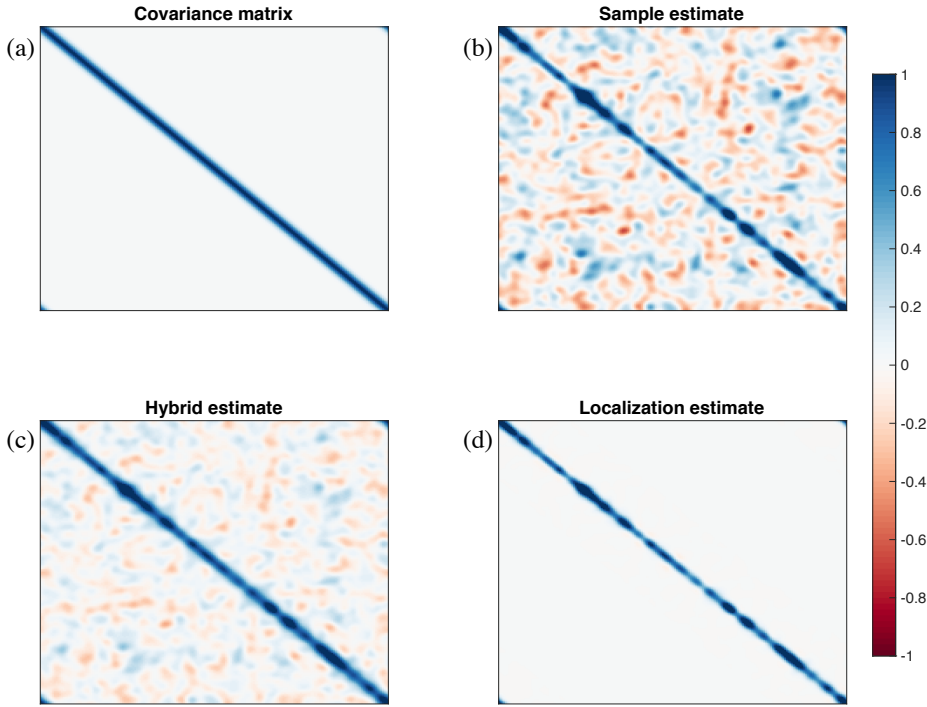}
	\caption{The true covariance matrix that we want to estimate from $n=30$ samples is shown in the top left.
	The sample covariance, shown in the top right, contains large errors because the ensemble size is small compared to the dimension $n=30\ll d=200$.
	The hybrid (bottom left) and Schur product (bottom right) localized covariance estimates are more accurate.}
	\label{fig:LocIllu}
\end{figure}

The figure makes it intuitively clear that hybrid and Schur product estimates are more accurate than the sample covariance.
The reason is that localized estimates take into account additional information about the covariance structure.
In \cref{sec:bayesian}, we explained how this additional information can be understood as a Bayesian prior distribution that is strongly influencing the covariance estimates.

\subsection{Experiments with hybrid estimators} \label{sec:hybrid_experiments}
We now turn our attention to the hybrid estimator
\begin{equation*}
    \hat{\bm{\Sigma}}^{\rm hyb}
    = \alpha \bm{\Sigma}^{\rm prior} + (1-\alpha) \hat{\bm{\Sigma}}^{\rm samp},
\end{equation*}
and evaluate whether or not it is appropriate to scale the interpolation factor $\alpha$ as
$\alpha \sim n^{-1}$,
which is suggested by the Bayesian theory \cref{eq:ShrinkageAlpha}.

We use the following procedure.
For each covariance matrix on our list, we apply interpolation factors $\alpha$ ranging from $0$ to $1$, in steps of $0.05$.
Then, we compute the error
\begin{equation*}
    \text{Error}(\alpha) := \frac{\lVert \hat{\bm{\Sigma}}^{{\rm hyb,}\,\alpha} - \bm{\Sigma}\rVert_{\rm F}}{\lVert \bm{\Sigma} \rVert_{\rm F}},
\end{equation*}
where $\hat{\bm{\Sigma}}^{{\rm hyb,}\,\alpha}$ is the hybrid estimate of the covariance matrix $\bm{\Sigma}$ using the interpolation factor $\alpha$.
We repeat this procedure $10^3$ times and average the error over independent data sets.
The ``optimal'' interpolation factor is the one that leads to the smallest averaged error.
Repeating this procedure for various sample sizes allows us to describe how the optimal interpolation factor $\alpha$ depends on the ensemble size $n$.

We define the prior covariance matrix $\bm{\Sigma}^{\rm prior}$ using a single-scale, Gaussian kernel model, but we vary the length scale for each set of experiments.
For the single-scale experiments we choose $\ell=1$ (Laplacian kernel) and $\ell=16$ (Gaussian kernel); 
for the multiscale covariance we choose $\ell=4$;
for the nonstationary experiments, we choose $\ell=8$.
Finally, for the pressure-wind experiments we define a prior covariance by 
\begin{equation*}
    \bm{\Sigma}^\text{prior} = \begin{pmatrix}
    \bm{\Sigma}^\ell & \bm{\Sigma}^\ell \\ 
    \bm{\Sigma}^\ell & \bm{\Sigma}^\ell
    \end{pmatrix},
\end{equation*}
where $\bm{\Sigma}^\ell$ is a single-scale Gaussian kernel model with $\ell=5$.
We also performed experiments with different length scales for the various prior covariance matrices and obtained qualitatively similar results, which is not surprising, since the scaling of $\alpha$ in~\cref{eq:ShrinkageAlpha} is largely independent of the choice of the prior covariance matrix.

Results of our experiments are summarized in Figure~\ref{fig:ShrinkResults}.
\begin{figure}[tb]
	\centering
\includegraphics[width=1\textwidth]{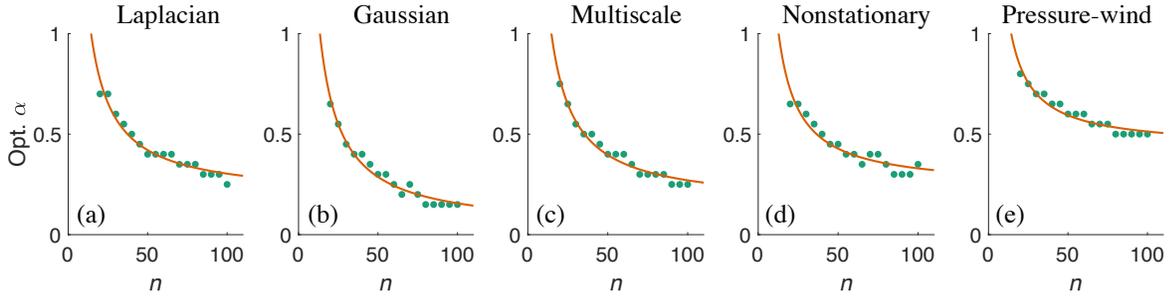}
	\caption{
	Optimal interpolation factor $\alpha$ as a function of ensemble size $n$, with a comparison to the theoretical $\alpha \sim n^{-1}$ scaling.}
	\label{fig:ShrinkResults}
\end{figure}
The figure shows the optimal interpolation factor as a function of the ensemble size $n$, along with a $n^{-1}$ least squares fit to these data.
In all five cases, the fit of a $n^{-1}$ polynomial is quite good, confirming the predictions from the Bayesian theory.
Remarkably, this scaling is independent of the underlying covariance structure.

\subsection{Experiments with Schur product estimators} \label{sec:schur_experiments}
We now consider the Schur product estimator 
\begin{equation*}
    \hat{\bm{\Sigma}}^{\rm Schur} = \hat{\bm{\Sigma}}^{\rm samp} \circ \bm{L}.
\end{equation*}
where the localization matrix is defined by the Laplacian kernel
$\bm{L}_{ij}^{\ell} 
= \exp\bigl(-d_{ij}\slash \ell\bigr)$
or the Gaussian kernel 
$\bm{L}_{ij}^{\ell} 
= \exp\bigl(-(d_{ij}\slash \ell)^2\bigr)$,
involving a single length-scale parameter $\ell$.
For the pressure-wind experiments, we define the overall localization matrix by
\begin{equation*}
    \bm{L} = \begin{pmatrix}
    \bm{\Sigma}^\ell & \bm{\Sigma}^\ell \\ 
    \bm{\Sigma}^\ell & \bm{\Sigma}^\ell
    \end{pmatrix},
\end{equation*}
where $\bm{\Sigma}^\ell$ is a $200\times 200$ single-scale covariance matrix.
We evaluate whether or not the scalings $\ell \sim n$ and $\ell \sim n^{1/2}$ are appropriate, as predicted by the Bayesian theory \cref{eq:SchurScalingG,eq:SchurScalingL}.

We follow the same protocol as in the experiments with the hybrid estimator.
For each ensemble size $n$ and length scale $\ell$,
we perform $10^3$ independent experiments and compute the average of the error defined by 
\begin{equation*}
    \text{Error}(\ell) = \frac{\lVert \hat{\bm{\Sigma}}_\ell^{\rm Schur,\,\ell} - \bm{\Sigma} \rVert_{\rm F}}{\lVert \bm{\Sigma} \rVert_{\rm F}},
\end{equation*}
where $\hat{\bm{\Sigma}}_\ell^{\rm Schur,\,\ell}$ is the Schur product estimate using the length scale $\ell$.
We obtain an optimal length scale by minimizing this error over different length scales $\ell$.
Varying the ensemble size $n$ then allows us to describe how the optimal length scale varies with ensemble size.

\begin{figure}[t]
	\centering
	\includegraphics[width=1\textwidth]{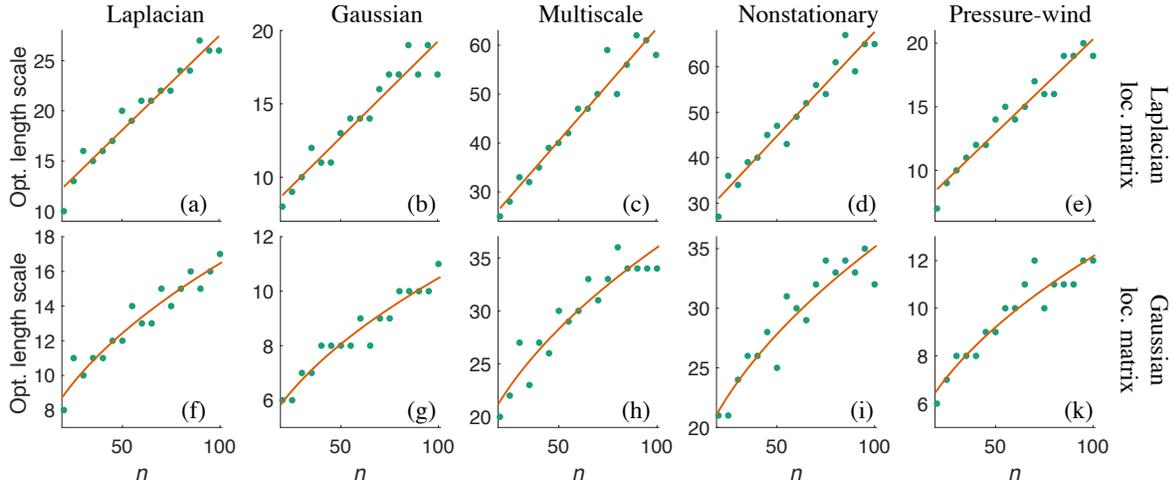}
	\caption{Optimal length scale as a function of ensemble size, with theoretical $\ell \sim n$ (top) and $\ell \sim n^{1/2}$ (bottom) fits.}
	\label{fig:SchurResults}
\end{figure}

Results of our experiments are summarized in \cref{fig:SchurResults}.
The numerical experiments confirm the theoretically derived linear scaling $\ell$ for the Laplacian kernel, as well as the square-root scaling of $\ell$ for the Gaussian kernel.
We see that the theory is useful for finite ensemble size $n$ and finite dimension $d$, and it is robust across five different covariance models.

\section{Properties of the Schur estimator and QC estimator}
\label{sec:theorems}

In this section, we prove mathematical results involving the Schur product estimator and the QC estimator.

\begin{proposition} \label{prop:not_bayesian}
For any positive definite $\bm{L}$,
the Schur product estimator
\begin{equation}
\label{eq:is_map}
    \hat{\bm{\Sigma}}^{\rm Schur} = \hat{\bm{\Sigma}}^{\rm samp} \circ \bm{L}.
\end{equation}
cannot be the MAP estimator for a smooth Bayesian prior distribution.
\end{proposition}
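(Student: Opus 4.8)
The plan is to pass through the first-order optimality condition for the MAP and show that the prior ``score'' it would force cannot be the gradient of any scalar function. Write the log-posterior as $\log p(\bm{\Sigma}) + \log p(\bm{X}\mid\bm{\Sigma})$. Differentiating the Gaussian log-likelihood \cref{eq:invert} with the standard identities $\nabla_{\bm\Sigma}\log|\bm\Sigma^{-1}| = -\bm\Sigma^{-1}$ and $\nabla_{\bm\Sigma}\textup{tr}(\bm A\bm\Sigma^{-1}) = -\bm\Sigma^{-1}\bm A\bm\Sigma^{-1}$ gives $\nabla_{\bm\Sigma}\log p(\bm X\mid\bm\Sigma) = \tfrac n2\bm\Sigma^{-1}(\hat{\bm\Sigma}^{\rm samp}-\bm\Sigma)\bm\Sigma^{-1}$. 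Since the Schur product of positive definite matrices is positive definite, the candidate $\hat{\bm\Sigma}^{\rm Schur}$ lies in the interior of the cone, so if it were the MAP the gradient would have to vanish there. This yields the identity that the prior score is forced to satisfy,
\[
\nabla\log p(\bm\Sigma) = \tfrac n2\,\bm\Sigma^{-1}\bigl(\bm\Sigma - \hat{\bm\Sigma}^{\rm samp}\bigr)\bm\Sigma^{-1},
\qquad \bm\Sigma = \hat{\bm\Sigma}^{\rm samp}\circ\bm L,
\]
and this must hold for \emph{every} admissible sample covariance.

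Next I would split on the structure of $\bm L$. If some off-diagonal entry vanishes, say $\bm L_{kl}=0$, then the map $\bm A\mapsto\bm A\circ\bm L$ from sample covariances to estimates is not injective: two positive definite $\bm A,\bm A'$ differing only in their $(k,l)$ and $(l,k)$ entries produce the same $\hat{\bm\Sigma}^{\rm Schur}=:\bm\Sigma^{\star}$, hence the same left-hand side $\nabla\log p(\bm\Sigma^{\star})$, while the right-hand side changes by $\tfrac n2(\bm\Sigma^{\star})^{-1}(\bm A'-\bm A)(\bm\Sigma^{\star})^{-1}\neq\bm 0$ because $(\bm\Sigma^{\star})^{-1}$ is invertible. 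This contradiction settles the case of compactly supported localization. Otherwise every entry of $\bm L$ is nonzero, the Schur map is an entrywise bijection, and I can invert it, $\hat{\bm\Sigma}^{\rm samp} = \bm\Sigma\circ\bm L^{\circ-1}$ (entrywise reciprocal of $\bm L$), to conclude that on an open subset of the cone the prior score must equal the explicit field
\[
\bm V(\bm\Sigma) := -\tfrac n2\,\bm\Sigma^{-1}\bigl(\bm\Sigma\circ\bm M\bigr)\bm\Sigma^{-1},
\qquad \bm M := \bm L^{\circ-1} - \bm 1\bm 1^{T},
\]
where $\bm 1\bm 1^{T}$ is the all-ones matrix. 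Note that positive definiteness of $\bm L$ already forces $\bm M\neq\bm 0$: for $d\geq 2$ a positive definite matrix has full rank and therefore differs from the rank-one all-ones matrix, so not all entries of $\bm L$ equal $1$.

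The crux is then to show $\bm V$ is not conservative. Because $\nabla\log p$ is literally a gradient, its derivative must be self-adjoint in the trace inner product, i.e.\ $\langle D_{\bm U}\bm V,\bm W\rangle = \langle D_{\bm W}\bm V,\bm U\rangle$ for all symmetric $\bm U,\bm W$. Differentiating $\bm V$ by the product rule produces three terms; the two arising from differentiating the $\bm\Sigma^{-1}$ factors cancel in the antisymmetric part by cyclicity of the trace, and only the Schur term survives, leaving
\[
\langle D_{\bm U}\bm V,\bm W\rangle - \langle D_{\bm W}\bm V,\bm U\rangle
= -\tfrac n2\sum_{k,l}\bm M_{kl}\Bigl[\bm U_{kl}(\bm\Sigma^{-1}\bm W\bm\Sigma^{-1})_{kl} - \bm W_{kl}(\bm\Sigma^{-1}\bm U\bm\Sigma^{-1})_{kl}\Bigr].
\]
This vanishes whenever $\bm\Sigma$ is diagonal, so the obstruction is invisible on diagonal matrices; I would therefore evaluate at a non-diagonal $\bm\Sigma$ whose inverse has a nonzero $(k,l)$ entry for an index pair with $\bm M_{kl}\neq 0$, and take simple rank-one symmetric directions (e.g.\ $\bm U = \bm e_k\bm e_l^{T} + \bm e_l\bm e_k^{T}$ and $\bm W = \bm e_k\bm e_k^{T}$), for which a short computation makes the sum nonzero.

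I expect the main obstacle to be precisely this last step. The difficulty is conceptual as much as computational: the integrability condition holds automatically on the diagonal subfamily, where conjugation by $\bm\Sigma^{-1}$ commutes with the Schur product, so the argument hinges on the \emph{non-commutation} of conjugation by $\bm\Sigma^{-1}$ with $\circ\,\bm M$ at a genuinely non-diagonal base point, and one must choose the witness indices and directions so that the surviving contribution does not accidentally cancel for special $\bm L$. Since $\bm M\neq\bm 0$ is guaranteed by the positive definiteness of $\bm L$, exhibiting a single such nonzero value contradicts the assumption that $\nabla\log p$ is a gradient, completing the dichotomy and hence the proof.
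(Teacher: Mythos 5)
Your proposal is correct, and it rests on the same underlying obstruction as the paper's proof---the first-order optimality condition forces the prior score $\nabla \log p$ to equal an explicit data-dependent field, and the symmetry of second derivatives of $\log p$ is then violated---but you package the argument genuinely differently. The paper differentiates the stationarity identity with respect to the entries $\hat{\bm{\Sigma}}^{\rm samp}_{kl}$, extracts the non-symmetric part, concludes that any two columns $j,l$ of $\bm{L}$ with $\bm{L}_{jl} \neq 0$ must coincide (impossible for a positive definite $\bm{L}$ unless it is diagonal), and then rules out diagonal $\bm{L}$ by a separate perturbation argument. You instead split on whether $\bm{L}$ has a vanishing off-diagonal entry: your first case is simpler than anything in the paper, since non-injectivity of $\bm{A} \mapsto \bm{A} \circ \bm{L}$ immediately forces two distinct values of $\nabla \log p$ at the same point; your second case inverts the Schur map and computes the curl of the forced score field in $\bm{\Sigma}$-coordinates, which is the paper's mixed-derivative computation after a linear change of variables, and your antisymmetric-part formula is correct. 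The one step you leave open---that the witness directions give a nonzero value---does go through: with $\bm{U} = \bm{e}_a \bm{e}_b^T + \bm{e}_b \bm{e}_a^T$ and $\bm{W} = \bm{e}_a \bm{e}_a^T$ the antisymmetric part equals
\begin{equation*}
    -n \bigl( \bm{M}_{ab} - \bm{M}_{aa} \bigr) (\bm{\Sigma}^{-1})_{aa} (\bm{\Sigma}^{-1})_{ab}
    = -n \Bigl( \frac{1}{\bm{L}_{ab}} - \frac{1}{\bm{L}_{aa}} \Bigr) (\bm{\Sigma}^{-1})_{aa} (\bm{\Sigma}^{-1})_{ab},
\end{equation*}
and if this vanished for every pair $(a,b)$, together with the analogous choice $\bm{W} = \bm{e}_b \bm{e}_b^T$, then all entries of $\bm{L}$ would be equal, making $\bm{L}$ a multiple of the rank-one all-ones matrix and contradicting positive definiteness for $d \geq 2$; a generic base point in the open image of the Schur map has $(\bm{\Sigma}^{-1})_{ab} \neq 0$, so a valid witness exists. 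What your route buys is a cleaner treatment of compactly supported localization (the Gaspari--Cohn case) and an explicit, checkable integrability formula; what the paper's route buys is that it never needs to invert the Schur map and treats all entries of $\bm{L}$ in a single differentiation. Both arguments implicitly assume $d \geq 2$ and that the stationarity identity holds for all positive definite inputs $\hat{\bm{\Sigma}}^{\rm samp}$, exactly as the paper does.
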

\begin{proof}
We use a proof by contradiction.
If there exists a smooth prior density function whose logarithm is $\ell(\bm{\Sigma})$,
we can write the log posterior density as
\begin{equation}
\label{eq:to_maximize}
	\ell(\bm{\Sigma}) - \frac{n}{2} \log |\bm{\Sigma}| - \frac{n}{2} \textup{tr}(\bm{\Sigma}^{-1} \hat{\bm{\Sigma}}^{\rm samp}).
\end{equation}
The gradient of \cref{eq:to_maximize} is given by
\begin{equation*}
	\nabla \ell(\bm{\Sigma}) - \frac{n}{2} \bm{\Sigma}^{-1}
	+ \frac{n}{2} \bm{\Sigma}^{-1} \hat{\bm{\Sigma}}^{\rm samp} \bm{\Sigma}^{-1}.
\end{equation*}
We next assume that the log posterior density \cref{eq:to_maximize} achieves its maximum value at the matrix $\bm{\Sigma} = \hat{\bm{\Sigma}}^{\rm samp} \circ \bm{L}$.
The Schur product theorem tells us $\bm{\Sigma}$ is positive definite whenever the sample covariance $\hat{\bm{\Sigma}}^{\rm samp}$ is positive definite, whereby $\bm{\Sigma}$ is a local maximum and the gradient equals zero at this point:
\begin{equation}
\label{eq:this_equation}
	\bm{0} = \nabla \ell (\hat{\bm{\Sigma}}^{\rm samp} \circ \bm{L}) 
	- \frac{n}{2} (\hat{\bm{\Sigma}}^{\rm samp} \circ \bm{L})^{-1}
	+ \frac{n}{2} (\hat{\bm{\Sigma}}^{\rm samp} \circ \bm{L})^{-1}
	\hat{\bm{\Sigma}}^{\rm samp}
	(\hat{\bm{\Sigma}}^{\rm samp} \circ \bm{L})^{-1}.
\end{equation}
We will use \cref{eq:this_equation} to obtain a contradiction.
To begin, we differentiate the $(i, j)$ element of \cref{eq:this_equation} with respect to $\hat{\bm{\Sigma}}^{\rm samp}_{kl}$ to yield
\begin{align*}
	0 &= \bm{L}_{kl} 
	\partial^2_{\bm{\Sigma}_{ij} \bm{\Sigma}_{kl}} (\hat{\bm{\Sigma}}^{\rm samp} \circ \bm{L}) 
	+ \frac{n}{2} (\bm{L}_{kl} + 1) (\hat{\bm{\Sigma}}^{\rm samp} \circ \bm{L})^{-1}_{ik}
	(\hat{\bm{\Sigma}}^{\rm samp} \circ \bm{L})^{-1}_{jl} \\
	&- \frac{n}{2} \bm{L}_{kl} 
	(\hat{\bm{\Sigma}}^{\rm samp} \circ \bm{L})^{-1}_{ik}
	\bm{e}_j^\ast (\hat{\bm{\Sigma}}^{\rm samp} \circ \bm{L})^{-1}
	\hat{\bm{\Sigma}}^{\rm samp}
	(\hat{\bm{\Sigma}}^{\rm samp} \circ \bm{L})^{-1} \bm{e}_l \\
	&- \frac{n}{2} \bm{L}_{kl} 
	(\hat{\bm{\Sigma}}^{\rm samp} \circ \bm{L})^{-1}_{jl}
	\bm{e}_i^\ast (\hat{\bm{\Sigma}}^{\rm samp} \circ \bm{L})^{-1}
	\hat{\bm{\Sigma}}^{\rm samp}
	(\hat{\bm{\Sigma}}^{\rm samp} \circ \bm{L})^{-1} \bm{e}_k.
\end{align*}
By multiplying through by $\bm{L}_{ij}$, we obtain an expression of the form
\begin{equation}
\label{eq:symmetric_1}
	\bm{L}_{ij} \bm{L}_{kl} \partial^2_{\bm{\Sigma}_{ij} \bm{\Sigma}_{kl}} (\hat{\bm{\Sigma}}^{\rm samp} \circ \bm{L}) = - \frac{n}{2} \bm{L}_{ij} (\hat{\bm{\Sigma}}^{\rm samp} \circ \bm{L})^{-1}_{ik}
	(\hat{\bm{\Sigma}}^{\rm samp} \circ \bm{L})^{-1}_{jl} + f_{\rm sym}((i, j), (k, l)),
\end{equation}
where $f_{\rm sym}((i, j), (k, l))$ is a symmetric function of $(i, j)$ and $(k, l)$. Switching the roles of $(i, j)$ and $(k, l)$ and using the symmetry of the second derivatives, we have
\begin{equation}
\label{eq:symmetric_2}
	\bm{L}_{ij} \bm{L}_{kl} \partial^2_{\bm{\Sigma}_{ij} \bm{\Sigma}_{kl}} (\hat{\bm{\Sigma}}^{\rm samp} \circ \bm{L}) = - \frac{n}{2} \bm{L}_{kl} (\hat{\bm{\Sigma}}^{\rm samp} \circ \bm{L})^{-1}_{jl}
	(\hat{\bm{\Sigma}}^{\rm samp} \circ \bm{L})^{-1}_{jl} + f_{\rm sym}((i, j), (k, l)).
\end{equation}
Subtracting \cref{eq:symmetric_1} from \cref{eq:symmetric_2} leads to the expression
\begin{equation*}
	0 = (\bm{L}_{ij} - \bm{L}_{kl}) (\hat{\bm{\Sigma}}^{\rm samp} \circ \bm{L})^{-1}_{ik}
	(\hat{\bm{\Sigma}}^{\rm samp} \circ \bm{L})^{-1}_{jl},
\end{equation*}
valid for all positive definite matrices $\hat{\bm{\Sigma}}^{\rm samp}$ and all indices $1 \leq i, j, k, l \leq p$.
We note that $(\hat{\bm{\Sigma}}^{\rm samp} \circ \bm{L})^{-1}_{ii}$ must be positive (it is the diagonal entry of a positive definite matrix); hence,
\begin{equation}
\label{eq:differentiate_again}
	0 = (\bm{L}_{ij} - \bm{L}_{il}) (\hat{\bm{\Sigma}}^{\rm samp} \circ \bm{L})^{-1}_{jl},
\end{equation}
Differentiating \cref{eq:differentiate_again} with respect to $\hat{\bm{\Sigma}}^{\rm samp}_{jl}$, we find
\begin{equation*}
	0 = -(\bm{L}_{ij} - \bm{L}_{il})
	(\bm{\Sigma} \circ \bm{L})^{-1}_{jj} 
	(\bm{\Sigma} \circ \bm{L})^{-1}_{ll} \bm{L}_{jl},
\end{equation*}
which implies that the $j$ and $l$ columns of $\bm{L}$ are identical when $\bm{L}_{jl} \neq 0$.
Since $\bm{L}$ is positive definite, we cannot have two identical columns and we conclude that $\bm{L}$ must be a diagonal matrix.
Last, $\bm{L}$ cannot be diagonal. To show this,
we write $\hat{\bm{\Sigma}}^{\rm samp} = \bm{D} + \epsilon \bm{\Delta}$ where $\bm{D}$ is diagonal and $\bm{\Delta}$ is a nontrivial off-diagonal matrix.
Then, the gradient equality \cref{eq:this_equation} yields
\begin{equation*}
	\bm{0} = \nabla \ell (\bm{D} \circ \bm{L}) 
	- \frac{n}{2} (\bm{D} \circ \bm{L})^{-1}
	+ \frac{n}{2} (\bm{D} \circ \bm{L})^{-1}
	(\bm{D} + \epsilon \bm{\Delta})
	(\bm{D} \circ \bm{L})^{-1}.
\end{equation*}
Taking the derivative with respect to $\epsilon$, we find that $\bm{\Delta} = \bm{0}$, which cannot hold because $\bm{\Delta}$ is nontrivial by assumption.
We have arrived at a contradiction and conclude that the Schur product estimator \cref{eq:is_map} cannot be the MAP estimator for a smooth Bayesian prior.
\end{proof}

\begin{proposition} \label{prop:max_entropy}
If the QC density 
\begin{equation}
\label{eq:to_confirm}
    p(\bm{\Sigma}) \propto
    \exp\Bigl(- \frac{1}{4} \textup{tr}\bigl(\bm{\Sigma}^{-1} \bigr(\bm{\Theta} \circ \bm{\Sigma}^{-1}\bigr)\bigr) \Bigr)
\end{equation}
is a well-defined density (i.e., it integrates to one),
then it solves the entropy maximization problem
\begin{equation*}
    \max_p 
    \biggl\{
    H[p]
    - \frac{1}{4} \sum_{i,j = 1}^d \bm{\Theta}_{ij}\, \int p(\bm{\Sigma}) |\bm{\Sigma}^{-1}_{ij}|^2 d\bm{\Sigma} \biggr\}.
\end{equation*}
\end{proposition}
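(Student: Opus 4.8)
The plan is to establish this maximum-entropy characterization by the standard information-theoretic route: the nonnegativity of the Kullback--Leibler (KL) divergence (Gibbs' inequality). Throughout, the maximization $\max_p$ is understood to range over all probability densities $p$ supported on the cone of symmetric positive definite matrices for which the objective is finite. The chief advantage of the KL approach over a Lagrange-multiplier/calculus-of-variations computation is that it delivers \emph{global} optimality and uniqueness directly, rather than merely identifying a stationary point that one must then argue is the global maximum.

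The first step is a purely algebraic simplification of the penalty term. Using the symmetry of both $\bm{\Theta}$ and $\bm{\Sigma}^{-1}$, I would expand the trace as
\begin{equation*}
\textup{tr}\bigl(\bm{\Sigma}^{-1}(\bm{\Theta}\circ\bm{\Sigma}^{-1})\bigr) = \sum_{i,j=1}^d \bm{\Theta}_{ij}\,\lvert\bm{\Sigma}^{-1}_{ij}\rvert^2,
\end{equation*}
so that the QC density takes the form $q(\bm{\Sigma}) = Z^{-1}e^{-g(\bm{\Sigma})}$, where $g(\bm{\Sigma}) := \tfrac14\sum_{i,j}\bm{\Theta}_{ij}\lvert\bm{\Sigma}^{-1}_{ij}\rvert^2$ and $Z := \int e^{-g}\,d\bm{\Sigma}$. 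The hypothesis that the QC density is well defined guarantees $Z<\infty$. The key observation is that $g$ is \emph{exactly} the integrand appearing in the penalty term, so the objective functional can be written compactly as $J[p] = H[p] - \int p\,g\,d\bm{\Sigma} = -\int p\log p\,d\bm{\Sigma} - \int p\,g\,d\bm{\Sigma}$.

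Next I would introduce the KL divergence $D(p\,\Vert\,q) = \int p\log(p/q)\,d\bm{\Sigma}\ge 0$, with equality if and only if $p=q$ almost everywhere. Substituting $\log q = -g - \log Z$ gives
\begin{equation*}
D(p\,\Vert\,q) = \int p\log p\,d\bm{\Sigma} + \int p\,g\,d\bm{\Sigma} + \log Z = -J[p] + \log Z,
\end{equation*}
where the last equality uses $\int p\,d\bm{\Sigma}=1$. Rearranging yields $J[p] = \log Z - D(p\,\Vert\,q) \le \log Z$, with equality precisely when $p=q$. Hence the QC density $q$ is the unique maximizer of $J$, and the maximum value is $\log Z$, which is exactly the claim.

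The argument is short, so the real work lies in two places. The first is verifying the trace identity and tracking the factor $\tfrac14$ correctly; this is routine but must be done carefully, because it is precisely what identifies the penalty integrand with the exponent of the QC density and makes the telescoping above succeed. The second, and more genuine, subtlety is domain-theoretic: one must confirm that Gibbs' inequality applies on the positive definite cone with respect to the reference measure used to define $H$, and restrict attention to densities $p$ for which both $\int p\log p$ and $\int p\,g$ are finite (otherwise $J[p]=-\infty$ and $p$ cannot be a maximizer). Given the standing hypothesis $Z<\infty$, these finiteness requirements are mild, and the bound $D(p\,\Vert\,q)\ge 0$ follows from the convexity of $t\mapsto t\log t$ via Jensen's inequality, so no essential obstacle remains beyond this bookkeeping.
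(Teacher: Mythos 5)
Your proof is correct, and it takes a genuinely different route from the paper's. The paper introduces a Lagrange multiplier $\eta$ for the normalization constraint, computes the first variation of the resulting functional, identifies the QC density as the stationary point, and then invokes concavity of $p \mapsto H[p]$ (together with linearity of the penalty term in $p$) to upgrade stationarity to global optimality. You instead verify the trace identity $\textup{tr}\bigl(\bm{\Sigma}^{-1}(\bm{\Theta}\circ\bm{\Sigma}^{-1})\bigr) = \sum_{i,j}\bm{\Theta}_{ij}|\bm{\Sigma}^{-1}_{ij}|^2$, write the objective as $J[p] = \log Z - D(p\,\Vert\,q)$, and conclude from Gibbs' inequality. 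Your argument buys global optimality \emph{and} uniqueness of the maximizer in one stroke, and it sidesteps the formal manipulations of calculus of variations on an infinite-dimensional space (the paper's first-variation computation is heuristic as written, and its concavity step is what actually carries the load). What the paper's approach buys in exchange is that it \emph{derives} the exponential form of the density from the optimality conditions rather than verifying a candidate supplied in advance --- which better motivates where the QC prior comes from, even though the proposition itself is phrased as a verification. Your noted caveats (the factor of $\tfrac14$ bookkeeping, restricting to densities with finite $\int p\log p$ and $\int p\,g$, and fixing the reference measure on the positive definite cone) are exactly the right ones; none presents an obstacle given the standing hypothesis that $Z < \infty$.
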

\begin{proof}
We introduce a Lagrange multiplier $\eta$ to enforce the constraint that $\int p(\bm{\Sigma}) \mathop{d \bm{\Sigma}} = 1$ and then choose $p$ to maximize
\begin{equation*}
	L[p] = H[p]
	- \frac{1}{4} \sum_{i,j = 1}^d \bm{\Theta}_{ij}\, \int p(\bm{\Sigma}) |\bm{\Sigma}^{-1}_{ij}|^2 d\bm{\Sigma}
	+ \eta \Bigl(\int p(\bm{\Sigma}) \mathop{d \bm{\Sigma}} - 1\Bigr).
\end{equation*}
The first variation of $L[p]$ is
\begin{equation*}
	\delta L [p] = -\log p(\bm{\Sigma}) - 1 - 
	\frac{1}{4} \textup{tr}\bigl(\bm{\Sigma}^{-1} \bigr(\bm{\Theta} \circ \bm{\Sigma}^{-1}\bigr)\bigr)  + \eta.
\end{equation*}
Setting the first variation equal to zero yields
\begin{equation*}
	p(\bm{\Sigma}) = \exp\Bigl(- \frac{1}{4} \textup{tr}\bigl(\bm{\Sigma}^{-1} \bigr(\bm{\Theta} \circ \bm{\Sigma}^{-1}\bigr)\bigr) - 1 + \eta \Bigr),
\end{equation*}
which is a scalar multiple of the QC density \cref{eq:to_confirm}.
By selecting an appropriate Lagrange multiplier $\eta$, we ensure that $p$ is a probability density.
Additionally, $p \mapsto L[p]$ is a concave functional on probability densities, because the entropy $H[p]$ is concave on probability densities and the other terms are linear in $p$.
Since \cref{eq:to_confirm} is a stationary point of a concave Lagrangian, we conclude that \cref{eq:to_confirm} maximizes $L[p]$ and solves the entropy maximization problem.
\end{proof}

\begin{proposition} \label{prop:maximum}
Assume $\textup{diag}(\hat{\bm{\Sigma}}^{\textup{samp}}) > 0$, and consider a penalization of the form 
$\bm{\Theta} = s \bm{\Theta}^{\rm ref}$, 
where $\bm{\Theta}_{ij}^{\rm ref} > 0$ if and only if $i \neq j$.
Then for large enough $s > 0$, the log likelihood
\begin{equation*}
    \ell(\bm{\Sigma}) = 
    \frac{n}{2} \log \bigl| \bm{\Sigma}^{-1} \bigr| - \frac{n}{2} \textup{tr}\bigl(\hat{\bm{\Sigma}}^{\rm samp} \bm{\Sigma}^{-1}\bigr) - \frac{1}{4} \textup{tr}\bigl(\bm{\Sigma}^{-1} \bigl( \bm{\Theta} \circ \bm{\Sigma}^{-1} \bigr) \bigr)
\end{equation*}
has a unique positive definite global maximizer.
\end{proposition}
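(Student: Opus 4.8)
The plan is to pass to the precision matrix $\bm{P} = \bm{\Sigma}^{-1}$, which is a smooth bijection of the open convex cone of positive definite matrices onto itself, so that a unique positive definite maximizer for the reparametrized problem transfers back to a unique positive definite maximizer of $\ell$. Using $\log|\bm{\Sigma}^{-1}| = \log|\bm{P}|$, $\textup{tr}(\hat{\bm{\Sigma}}^{\rm samp}\bm{\Sigma}^{-1}) = \textup{tr}(\hat{\bm{\Sigma}}^{\rm samp}\bm{P})$, and $\textup{tr}(\bm{\Sigma}^{-1}(\bm{\Theta}\circ\bm{\Sigma}^{-1})) = \sum_{i,j}\bm{\Theta}_{ij}\bm{P}_{ij}^2$, together with $\bm{\Theta} = s\bm{\Theta}^{\rm ref}$ and $\bm{\Theta}^{\rm ref}_{ii}=0$, maximizing $\ell(\bm{\Sigma})$ is equivalent to maximizing
\[
  g(\bm{P}) = \frac{n}{2}\log|\bm{P}| - \frac{n}{2}\textup{tr}(\hat{\bm{\Sigma}}^{\rm samp}\bm{P}) - \frac{s}{4}\sum_{i\neq j}\bm{\Theta}^{\rm ref}_{ij}\bm{P}_{ij}^2
\]
over positive definite $\bm{P}$.

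First I would establish uniqueness via strict concavity. The map $\bm{P}\mapsto\log|\bm{P}|$ is strictly concave on the positive definite cone, the trace term is linear, and the penalty is a concave (negative semidefinite) quadratic form in the entries $\bm{P}_{ij}$ because every $\bm{\Theta}^{\rm ref}_{ij}>0$ for $i\neq j$. Hence $g$ is strictly concave on a convex domain and can have at most one maximizer.

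The substantive part is existence, for which I would show that $g$ is coercive, i.e., every superlevel set $\{g\geq c\}$ is a compact subset of the open cone. By Hadamard's inequality $\log|\bm{P}| \leq \sum_i \log\bm{P}_{ii}$, so
\[
  g(\bm{P}) \leq \sum_i\Bigl(\tfrac{n}{2}\log\bm{P}_{ii} - \tfrac{n}{2}\hat{\bm{\Sigma}}^{\rm samp}_{ii}\bm{P}_{ii}\Bigr) + \sum_{i\neq j}\Bigl(-\tfrac{n}{2}\hat{\bm{\Sigma}}^{\rm samp}_{ij}\bm{P}_{ij} - \tfrac{s}{4}\bm{\Theta}^{\rm ref}_{ij}\bm{P}_{ij}^2\Bigr).
\]
Each diagonal summand is bounded above and tends to $-\infty$ as $\bm{P}_{ii}\to 0^+$ or $\bm{P}_{ii}\to\infty$ precisely because $\hat{\bm{\Sigma}}^{\rm samp}_{ii}>0$, while each off-diagonal summand is a downward parabola, bounded above and tending to $-\infty$ as $|\bm{P}_{ij}|\to\infty$ precisely because $\bm{\Theta}^{\rm ref}_{ij}>0$. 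Consequently $g$ is bounded above, and on any superlevel set every entry $\bm{P}_{ij}$ lies in a fixed bounded range with the diagonal entries bounded away from zero. Restricting to this bounded set of entries, $\log|\bm{P}|\to-\infty$ as $\bm{P}$ approaches the singular boundary while the remaining terms stay bounded, so the superlevel set is also bounded away from singular matrices and is therefore compact. Continuity of $g$ then yields a maximizer, and strict concavity makes it unique and positive definite.

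The main obstacle is exactly the coercivity step. Because $\hat{\bm{\Sigma}}^{\rm samp}$ has rank at most $n < d$, the unpenalized log likelihood can be driven upward without bound by inflating $\bm{P}$ along null directions of $\hat{\bm{\Sigma}}^{\rm samp}$ — which is why the maximum likelihood estimator fails to exist — and the delicate point is that the penalty only acts on \emph{off-diagonal} entries. The two hypotheses are precisely what close this gap: positivity of $\textup{diag}(\hat{\bm{\Sigma}}^{\rm samp})$ forbids any coordinate axis $\bm{e}_i$ from being a null direction and controls growth of the diagonal through the linear term, while the off-diagonal penalty controls growth in every direction supported on two or more coordinates. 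I would present the Hadamard bound as the clean device that decouples the entries and makes this dichotomy transparent. I note that the argument in fact goes through for every $s>0$, so the hypothesis of large $s$ — which is what validates the Schur product approximation of the preceding subsection — is more than enough to guarantee the claimed unique positive definite maximizer.
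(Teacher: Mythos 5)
Your proof is correct, and its skeleton matches the paper's: Hadamard's inequality applied to $\bm{\Sigma}^{-1}$ to decouple the objective into one-dimensional functions of the entries of the precision matrix, entrywise bounds that confine any near-maximizer to a bounded set, and concavity in $\bm{\Sigma}^{-1}$ to upgrade a local/isolated maximizer to the unique global one. The one step where you genuinely diverge is the compactness argument, and your version is stronger. The paper bounds each entry of $\bm{\Sigma}^{-1}$ individually and then invokes large $s$ so that the off-diagonal entries are forced to be $\mathcal{O}(s^{-1/2})$, making the confining set near-diagonal and hence a compact subset of the positive definite cone; without large $s$, those entrywise bounds alone do not keep the set away from the singular boundary. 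You instead observe that on the bounded set the trace and penalty terms stay bounded while $\log|\bm{P}|\to-\infty$ at the singular boundary, so every superlevel set is automatically closed within the open cone. This removes the large-$s$ hypothesis entirely and shows the proposition holds for every $s>0$, which is a genuine strengthening (the large-$s$ regime is still what matters for the Schur-product asymptotics, but it is not needed for well-posedness of the MAP). Your use of strict concavity of $\log|\bm{P}|$ for uniqueness is also slightly cleaner than the paper's ``isolated maximum of a concave function'' argument, though both are valid. One presentational nit: when you say the superlevel set is ``bounded away from singular matrices,'' it is worth spelling out, as you implicitly do, that this is a statement about limit points of sequences in the set, so that closedness in the ambient space (and hence compactness) follows.
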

\begin{proof}
We decompose the sample covariance into diagonal and off-diagonal components 
\begin{equation*}
    \hat{\bm{\Sigma}}^{\rm samp} = \hat{\bm{\Sigma}}^{\rm diag} + \hat{\bm{\Sigma}}^{\rm off}
\end{equation*}
and similarly decompose the log likelihood as
\begin{equation*}
    \ell(\bm{\Sigma}) = 
    \Biggl[\frac{n}{2} \log \bigl| \bm{\Sigma}^{-1} \bigr| - \frac{n}{2} \textup{tr}\bigl(\hat{\bm{\Sigma}}^{\rm diag} \bm{\Sigma}^{-1}\bigr)\Biggr]
    + \Biggl[- \frac{n}{2} \textup{tr}\bigl(\hat{\bm{\Sigma}}^{\rm off} \bm{\Sigma}^{-1}\bigr)
    - \frac{1}{4} \textup{tr}\bigl(\bm{\Sigma}^{-1} \bigl( \bm{\Theta} \circ \bm{\Sigma}^{-1} \bigr) \bigr)\Biggr].
\end{equation*}
Next, we optimize the two components of the log likelihood separately.
Since the determinant of a positive definite matrix is bounded from above by the product of its diagonal entries, we calculate
\begin{equation*}
    \frac{n}{2} \log \bigl| \bm{\Sigma}^{-1} \bigr| 
	- \frac{n}{2} \textup{tr}\bigl(\hat{\bm{\Sigma}}^{\rm diag} \bm{\Sigma}^{-1}\bigr)
	\leq \frac{n}{2} \sum_{i=1}^d \bigl[\log \bm{\Sigma}^{-1}_{ii} - \hat{\bm{\Sigma}}^{\rm samp}_{ii} \bm{\Sigma}^{-1}_{ii}\bigr] 
	\leq \frac{n}{2} \sum_{i=1}^d \bigl[-\log \hat{\bm{\Sigma}}^{\rm samp}_{ii} - 1\bigr].
\end{equation*}
The left-hand side does not come within $\frac{n}{2}$ units of the upper bound unless
\begin{equation*}
    \frac{1}{2} \leq \hat{\bm{\Sigma}}^{\rm samp}_{ii} \bm{\Sigma}^{-1}_{ii} \leq 2
\end{equation*}
for each $1 \leq i \leq d$.
Next observe that 
\begin{align*}
    - \frac{n}{2} \textup{tr}\bigl(\hat{\bm{\Sigma}}^{\rm off} \bm{\Sigma}^{-1}\bigr)
    - \frac{1}{4} \textup{tr}\bigl(\bm{\Sigma}^{-1} \bigl( \bm{\Theta} \circ \bm{\Sigma}^{-1} \bigr) \bigr)
    &= \frac{1}{4} \sum_{i \neq j} 
    \biggl[\frac{n^2 |\hat{\bm{\Sigma}}_{ij}^{\rm samp}|^2}{\bm{\Theta}_{ij}} 
    - \bm{\Theta}_{ij} \biggl|\bm{\Sigma}_{ij}^{-1} + \frac{n \hat{\bm{\Sigma}}_{ij}^{\rm samp}}{\bm{\Theta}_{ij}} \biggr|^2 \biggr] \\
    &\leq \frac{n^2}{4} \sum_{i \neq j} 
    \frac{|\hat{\bm{\Sigma}}_{ij}^{\rm samp}|^2}{\bm{\Theta}_{ij}}.
\end{align*}
In this case, the left-hand side does not come within $\frac{n}{2}$ units of the upper bound, unless
\begin{equation*}
    \biggl|\bm{\Sigma}_{ij}^{-1} + \frac{n \hat{\bm{\Sigma}}_{ij}^{\rm samp}}{\bm{\Theta}_{ij}} \biggr|^2 \leq \frac{2n}{\bm{\Theta}_{ij}}
\end{equation*}
for each $i \neq j$.
Tying together the argument, a large log likelihood
\begin{equation*}
    \ell(\bm{\Sigma}) \geq \frac{n}{2} \sum_{i=1}^d \bigl[-\log \hat{\bm{\Sigma}}^{\rm samp}_{ii} - 1\bigr] + \frac{n^2}{4} \sum_{i \neq j} 
    \frac{|\hat{\bm{\Sigma}}_{ij}^{\rm samp}|^2}{\bm{\Theta}_{ij}}
    - \frac{n}{2}
\end{equation*}
implies the following conditions
\begin{equation*}
    \begin{cases}
    \frac{1}{2} \leq \hat{\bm{\Sigma}}^{\rm samp}_{ii} \bm{\Sigma}^{-1}_{ii} \leq 2, & 1 \leq i \leq d \\[10pt]
    \biggl|\bm{\Sigma}_{ij}^{-1} + \frac{n \hat{\bm{\Sigma}}_{ij}^{\rm samp}}{\bm{\Theta}_{ij}} \biggr|^2 \leq \frac{2n}{\bm{\Theta}_{ij}}, & i \neq j.
    \end{cases}
\end{equation*}
For a large enough penalization strength $s$, these conditions confine $\bm{\Sigma}$ to a compact set of positive definite matrices.
Additionally, this set contains an isolated maximizer of $\ell$, because one point inside the set, $\bm{\Sigma} = \hat{\bm{\Sigma}}^{\rm diag}$, achieves a higher log likelihood
\begin{equation*}
    \ell(\hat{\bm{\Sigma}}^{\rm diag}) = \frac{n}{2} \sum_{i=1}^d \bigl[-\log \hat{\bm{\Sigma}}^{\rm samp}_{ii} - 1\bigr]
\end{equation*}
than all the points outside the set.
Last, note that
\begin{equation*}
	\bm{\Sigma}^{-1} \mapsto \frac{n}{2} \log \bigl| \bm{\Sigma}^{-1} \bigr| 
	- \frac{n}{2} \textup{tr}\bigl(\hat{\bm{\Sigma}}^{\rm samp} \bm{\Sigma}^{-1}\bigr) 
	- \frac{1}{4} \textup{tr}\bigl(\bm{\Sigma}^{-1} \bigl( \bm{\Theta} \circ \bm{\Sigma}^{-1} \bigr) \bigr)
\end{equation*}
is concave, whereby any isolated maximum is the unique global maximum.
\end{proof}

\section{Conclusions} \label{sec:conclusions}
We have studied the problem of estimating a high-dimensional covariance matrix from a small number of samples.
This problem is difficult but also ubiquitous in the setting of numerical weather prediction (NWP)
when merging high-dimensional models with real-world observations.
We have developed a new mathematical theory to help justify the covariance estimators which are used in NWP practice, but which have little to no mathematical justification.

NWP practitioners boost the accuracy of covariance estimates by enforcing the assumption that correlations decay with distance.
We have argued that this practical approach follows the Bayesian paradigm of estimating an unknown (covariance matrix) from data (the samples we have) and prior information (spatial decay of correlations).
We have then investigated prior distributions that lead to practically useful covariance estimators.
Put differently, we have interpreted common estimation procedures within a Bayesian framework and identified how particular choices of prior distributions influence posterior estimates of covariance matrices.

The situation is clear in the case of hybrid estimators. 
These estimators can be understood as Bayesian estimators with an inverse Wishart prior distribution. 
The Bayesian theory shows that hybrid estimators converge at an asymptotically optimal rate and also reveals how to adjust the hybrid estimator when the ensemble size changes.

The Bayesian interpretation of Schur product estimators is more complicated.
We have shown that Schur product estimators are \emph{not} Bayesian.
Nonetheless, we have proposed a new ``quadratically constrained'' distribution that leads to a Schur product estimator in the limit of increasing localization strength.
Building on this interpretation, the Bayesian theory suggests how best to adjust the length scale in a Schur product estimator as the ensemble size increases.
This scaling is new and relevant for practical NWP because it can reduce the amount of tuning required for efficient covariance estimation.

Perhaps more importantly, our analysis suggests that covariance estimation via Schur products may not be the most economical approach for covariance estimation.
For one, it is \emph{not} Bayesian and therefore it has not been rigorously justified.
Additionally, our theory, in line with previous work \cite{MTM19}, suggests that we can estimate covariance matrices more efficiently by working with the inverse of the covariance matrix, i.e., by penalizing \emph{conditional} correlations, rather than correlations.

\section*{Acknowledgments}
We thank Dr. Daniel Hodyss of the Naval Research Laboratory for discussions of hybrid estimators.
% MM is grateful for support by the US Office of Naval Research (ONR) grant N00014-21-1-2309.

\bibliographystyle{siamplain}
\bibliography{references}

\begin{thebibliography}{10}

\bibitem{AL13}
{\sc J.~Anderson and L.~Lei}, {\em Empirical localization of observation impact
  in ensemble {K}alman filters}, Monthly Weather Review, 141 (2013),
  pp.~4140--4153, \url{https://doi.org/10.1175/MWR-D-12-00330.1}.

\bibitem{BTB14}
{\sc P.~Bauer, A.~Thorpe, and G.~Brunet}, {\em The quiet revolution of
  numerical weather prediction}, Nature,  (2015), pp.~47--55,
  \url{https://doi.org/10.1038/nature14956}.

\bibitem{BL08a}
{\sc P.~J. Bickel and E.~Levina}, {\em {Covariance regularization by
  thresholding}}, Annals of Statistics, 36 (2008), pp.~2577--2604,
  \url{https://doi.org/10.1214/08-AOS600}.

\bibitem{BL08}
{\sc P.~J. Bickel and E.~Levina}, {\em {Regularized estimation of large
  covariance matrices}}, Annals of Statistics, 36 (2008), pp.~199--227,
  \url{https://doi.org/10.1214/009053607000000758}.

\bibitem{BS13}
{\sc C.~H. Bishop and E.~A. Satterfield}, {\em Hidden error variance theory.
  {P}art {I}: {E}xposition and analytic model}, Monthly Weather Review, 141
  (2013), pp.~1454--1468, \url{https://doi.org/10.1175/MWR-D-12-00118.1}.

\bibitem{CRZ16}
{\sc T.~T. Cai, Z.~Ren, and H.~H. Zhou}, {\em {Estimating structured
  high-dimensional covariance and precision matrices: {O}ptimal rates and
  adaptive estimation}}, Electronic Journal of Statistics, 10 (2016),
  pp.~1--59, \url{https://doi.org/10.1214/15-EJS1081}.

\bibitem{dawid1973marginalization}
{\sc A.~P. Dawid, M.~Stone, and J.~V. Zidek}, {\em Marginalization paradoxes in
  {B}ayesian and structural inference}, Journal of the Royal Statistical
  Society. Series B (Methodological), 35 (1973), pp.~189--233,
  \url{http://www.jstor.org/stable/2984907}.

\bibitem{diaconis1979conjugate}
{\sc P.~Diaconis and D.~Ylvisaker}, {\em {Conjugate Priors for Exponential
  Families}}, Annals of Statistics, 7 (1979), pp.~269--281,
  \url{https://doi.org/10.1214/aos/1176344611}.

\bibitem{E09Book}
{\sc G.~Evensen}, {\em Data Assimilation: The Ensemble Kalman Filter},
  Springer, second~ed., 2009, \url{https://doi.org/10.1007/978-3-642-03711-5}.

\bibitem{F15}
{\sc J.~Flowerdew}, {\em Towards a theory of optimal localisation}, Tellus A:
  Dynamic Meteorology and Oceanography, 67 (2015), p.~25257,
  \url{https://doi.org/10.3402/tellusa.v67.25257}.

\bibitem{FHetAl10}
{\sc A.~Fournier, G.~Hulot, D.~Jault, W.~Kuang, A.~Tangborn, N.~Gillet,
  E.~Canet, J.~Aubert, and F.~Lhuillier}, {\em An introduction to data
  assimilation and predictability in geomagnetism}, Space Science Review, 155
  (2010), pp.~247--291, \url{https://doi.org/10.1007/s11214-010-9669-4}.

\bibitem{FB07}
{\sc R.~Furrer and T.~Bengtsson}, {\em Estimation of high-dimensional prior and
  posterior covariance matrices in {K}alman filter variants}, Journal of
  Multivariate Analysis, 98 (2007), pp.~227--255,
  \url{https://doi.org/10.1016/j.jmva.2006.08.003}.

\bibitem{GC99}
{\sc G.~Gaspari and S.~E. Cohn}, {\em Construction of correlation functions in
  two and three dimensions}, Quarterly Journal of the Royal Meteorological
  Society, 125 (1999), pp.~723--757,
  \url{https://doi.org/10.1002/qj.49712555417}.

\bibitem{GMTK21}
{\sc K.~Gwirtz, M.~Morzfeld, W.~Kuang, and A.~Tangborn}, {\em {A testbed for
  geomagnetic data assimilation}}, Geophysical Journal International, 227
  (2021), pp.~2180--2203, \url{https://doi.org/10.1093/gji/ggab327}.

\bibitem{HWAS09}
{\sc T.~M. Hamill, J.~S. Whitaker, J.~L. Anderson, and C.~Snyder}, {\em
  Comments on “sigma-point kalman filter data assimilation methods for
  strongly nonlinear systems”}, Journal of the Atmospheric Sciences, 66
  (2009), pp.~3498--3500, \url{https://doi.org/10.1175/2009JAS3245.1}.

\bibitem{HM98}
{\sc P.~L. Houtekamer and H.~L. Mitchell}, {\em Data assimilation using an
  ensemble {K}alman filter technique}, Monthly Weather Review, 126 (1998),
  pp.~796--811,
  \url{https://doi.org/10.1175/1520-0493(1998)126<0796:DAUAEK>2.0.CO;2}.

\bibitem{H01}
{\sc P.~L. Houtekamer and H.~L. Mitchell}, {\em A sequential ensemble {K}alman
  filter for atmospheric data assimilation}, Monthly Weather Review, 129
  (2001), pp.~123--137,
  \url{https://doi.org/10.1175/1520-0493(2001)129<0123:ASEKFF>2.0.CO;2}.

\bibitem{jaynes1957information}
{\sc E.~T. Jaynes}, {\em Information theory and statistical mechanics},
  Physical Review, 106 (1957), pp.~620--630,
  \url{https://doi.org/10.1103/PhysRev.106.620}.

\bibitem{L03}
{\sc A.~C. Lorenc}, {\em The potential of the ensemble {K}alman filter for
  {NWP}—a comparison with {4D-Var}}, Quarterly Journal of the Royal
  Meteorological Society, 129 (2003), pp.~3183--3203,
  \url{https://doi.org/https://doi.org/10.1256/qj.02.132}.

\bibitem{MH22}
{\sc M.~Morzfeld and D.~Hodyss}, {\em A theory for why even simple covariance
  localization is so useful in ensemble data assimilation}, Monthly Weather
  Review,  (2022), \url{https://doi.org/10.1175/MWR-D-22-0255.1}.

\bibitem{MTM19}
{\sc M.~Morzfeld, X.~Tong, and Y.~Marzouk}, {\em Localization for {MCMC}:
  {S}ampling high-dimensional posterior distributions with local structure},
  Journal of Computational Physics, 380 (2019), pp.~1--28,
  \url{https://doi.org/https://doi.org/10.1016/j.jcp.2018.12.008}.

\bibitem{MA15}
{\sc B.~Ménétrier and T.~Auligné}, {\em Optimized localization and
  hybridization to filter ensemble-based covariances}, Monthly Weather Review,
  143 (2015), pp.~3931--3947, \url{https://doi.org/10.1175/MWR-D-15-0057.1}.

\bibitem{Menetrier15}
{\sc B.~Ménétrier, T.~Montmerle, Y.~Michel, and L.~Berre}, {\em Linear
  filtering of sample covariances for ensemble-based data assimilation. {P}art
  {I}: {O}ptimality criteria and application to variance filtering and
  covariance localization}, Monthly Weather Review, 143 (2015), pp.~1622--1643,
  \url{https://doi.org/10.1175/MWR-D-14-00157.1}.

\bibitem{NRS15}
{\sc E.~Nino-Ruiz and A.~Sandu}, {\em Ensemble {K}alman filter implementations
  based on shrinkage covariance matrix estimation}, Ocean Dynamics, 65 (2015),
  p.~1423–1439, \url{https://doi.org/10.1007/s10236-015-0888-9}.

\bibitem{OEtAl04}
{\sc E.~Ott, B.~Hunt, I.~Szunyogh, A.~Zimin, E.~Kostelich, M.~Corazza,
  E.~Kalnay, D.~Patil, and J.~Yorke}, {\em A local ensemble {K}alman filter for
  atmospheric data assimilation}, Tellus A, 56 (2004), pp.~415--428,
  \url{https://doi.org/10.3402/tellusa.v56i5.14462}.

\bibitem{PS03}
{\sc C.~Paciorek and M.~Schervish}, {\em Nonstationary covariance functions for
  {G}aussian process regression}, in Advances in Neural Information Processing
  Systems, S.~Thrun, L.~Saul, and B.~Sch\"{o}lkopf, eds., vol.~16, MIT Press,
  2003.

\bibitem{PSS21}
{\sc A.~A. Popov, A.~N. Subrahmanya, and A.~Sandu}, {\em A stochastic
  covariance shrinkage approach to particle rejuvenation in the ensemble
  transform particle filter}, Nonlinear Processes in Geophysics, 29 (2022),
  pp.~241--253, \url{https://doi.org/10.5194/npg-29-241-2022}.

\bibitem{PZ15}
{\sc J.~Poterjoy and F.~Zhang}, {\em Systematic comparison of four-dimensional
  data assimilation methods with and without the tangent linear model using
  hybrid background error covariance: {E4DVar} versus {4DEnVar}}, Monthly
  Weather Review, 143 (2015), pp.~1601--1621,
  \url{https://doi.org/10.1175/MWR-D-14-00224.1}.

\bibitem{press2005applied}
{\sc S.~J. Press}, {\em Applied Multivariate Analysis: Using Bayesian and
  Frequentist Methods of Inference}, Dover, second~ed., 2005.

\bibitem{SHKB18}
{\sc E.~A. Satterfield, D.~Hodyss, D.~D. Kuhl, and C.~H. Bishop}, {\em
  Observation-informed generalized hybrid error covariance models}, Monthly
  Weather Review, 146 (2018), pp.~3605--3622,
  \url{https://doi.org/10.1175/MWR-D-18-0016.1}.

\bibitem{stein2002screening}
{\sc M.~L. Stein}, {\em The screening effect in kriging}, Annals of Statistics,
  30 (2002), pp.~298--323, \url{http://www.jstor.org/stable/2700012}.

\bibitem{van2000asymptotic}
{\sc A.~W. v.~d. Vaart}, {\em Asymptotic Statistics}, Cambridge Series in
  Statistical and Probabilistic Mathematics, Cambridge University Press, 1998,
  \url{https://doi.org/10.1017/CBO9780511802256}.

\end{thebibliography}

\end{document}